\definecolor{DarkGreen}{rgb}{0.1,0.5,0.1}
\definecolor{DarkRed}{rgb}{0.5,0.1,0.1}
\definecolor{DarkBlue}{rgb}{0.1,0.1,0.5}
\newtheorem{theorem}{Theorem}[section]
\newtheorem*{namedtheorem}{\theoremname}
\newcommand{\theoremname}{testing}
\newtheorem{lemma}[theorem]{Lemma}
\newtheorem{claim}[theorem]{Claim}
\newtheorem{fact}[theorem]{Fact}
\newtheorem{observation}{Observation}
\newtheorem{corollary}[theorem]{Corollary}
\newtheorem*{question*}{Question}
\theoremstyle{definition}
\newtheorem{definition}[theorem]{Definition}
\theoremstyle{plain}
\newtheorem{Alg}{Algorithm}
\definecolor{DarkGreen}{rgb}{0.1,0.5,0.1}
\definecolor{DarkRed}{rgb}{0.5,0.1,0.1}
\definecolor{DarkBlue}{rgb}{0.1,0.1,0.5}
\newcommand{\ignore}[1]{}
\newcommand{\R}{\mathbb R}
\renewcommand{\floatc@ruled}[2]{\vspace{2pt}{\@fs@cfont \#1.\:} \#2 \par
 \vspace{1pt}}
\title{Super-resolution, Extremal Functions and the Condition Number of Vandermonde Matrices}
\author{Ankur Moitra \thanks{
Massachusetts Institute of Technology. Department of Mathematics and the Computer Science and Artificial Intelligence Lab. Email: {\tt moitra@mit.edu}.
This work is supported in part by a grant from the MIT NEC Corporation and a Google Research Award. }}
\begin{document}
\maketitle

\begin{abstract}
Super-resolution is a fundamental task in imaging, where the goal is to extract fine-grained structure from coarse-grained measurements. Here we are interested in a popular mathematical abstraction of this problem that has been widely studied in the statistics, signal processing and machine learning communities. We exactly resolve the threshold at which noisy super-resolution is possible. In particular, we establish a sharp phase transition for the relationship between the cutoff frequency ($m$) and the separation ($\Delta$). If $m > 1/\Delta + 1$, our estimator converges to the true values at an inverse polynomial rate in terms of the magnitude of the noise. And when $m < (1-\epsilon) /\Delta$ no estimator can distinguish between a particular pair of $\Delta$-separated signals even if the magnitude of the noise is exponentially small. 

Our results involve making novel connections between {\em extremal functions} and the spectral properties of Vandermonde matrices. We establish a sharp phase transition for their condition number which in turn allows us to give the first noise tolerance bounds for the matrix pencil method. Moreover we show that our methods can be interpreted as giving preconditioners for Vandermonde matrices, and we use this observation to design faster algorithms for super-resolution. We believe that these ideas may have other applications in designing faster algorithms for other basic tasks in signal processing.

\end{abstract}

\thispagestyle{empty}

\newpage

\setcounter{page}{1}

\section{Introduction}

\subsection{Background}

Super-resolution is a fundamental task in imaging, where the goal is to resolve features at a scale beyond the diffraction limit. This problem has received considerable attention both as a {\em statistical} and {\em algorithmic} problem \---- where the goal is to localize objects using only low frequency measurements \---- as well as an {\em engineering} problem \---- where the goal is to design devices that can work around the diffraction limit. In fact, the 2014 Nobel Prize in Chemistry was recently awarded to Eric Betzig, Stefan Hell and William Moerner 
\begin{quote}
\begin{center}
{\Large ``} ...for the development of super-resolved fluorescence microscopy... {\Large "}
\end{center}
\end{quote}
which has played a key role in a number of scientific discoveries, such as recent discoveries about how viruses change their shape before entering a cell \cite{R}. 

We will be interested in a popular mathematical abstraction of this problem that has been widely studied in the statistics, signal processing and machine learning communities \cite{D, CFG1, CFG2, BTR}, In particular, there is an unknown super-position of $k$ point sources: 
$$x(t) = \sum_{j = 1}^k u_j \delta_{f_j}(t)$$
where $\delta_{\tau}(t)$ is the Dirac delta function at $\tau$ and we assume that each $f_j \in [0, 1)$. Additionally each measurement $v_\ell$ of these point sources takes the form 
$$v_{\ell} = \int_{0}^1 e^{ i 2 \pi \ell t} x(t) \mbox{ } dt + \eta_\ell= \sum_{j=1}^k u_j e^{ i 2 \pi  f_j \ell} + \eta_\ell$$
where $\eta_\ell$ is noise in the measurement. However due to physical limitations, we are only able to measure the data at low-frequencies \---- i.e. we can observe $v_\ell$ for $|\ell | \leq m$ where $m$ is referred to as the {\em cutoff frequency}.  The basic question is, how large does $m$ need to be in order to recover the locations $f_j$ and the coefficients $u_j$? The problem of recovering fine-grained structure from coarse-grained measurements has a broad set of applications including in domains such as medical imaging, microscopy, astronomy, radar, geophysics and spectroscopy (see \cite{D, PPK, CFG1} and references therein). This is also a fundamental algorithmic and statistical problem in its own right. 

This particular estimation problem has a long history of study, and there are a variety of methods that work in the noise-free case for $m = k$ such as Prony's method (1795) \cite{dP}, Pisarenko's method (1973) \cite{Pi}, the matrix pencil method (1990) \cite{HS} and other linear prediction methods (see references in \cite{Stoi}). This is optimal since if we had any fewer than $2k + 1$ total measurements regardless of their frequency, the true signal would not necessarily be the sparsest solution to the inverse problem \cite{E}. Here we study this problem in the presence of noise, and establish that it 
exhibits a sharp phase transition. Let $\Delta$ denote the minimum separation between any pair of the $f_j$'s according to the wrap-around metric on the unit interval. Our main algorithmic result shows that if $m > 1/\Delta + 1$ then noisy super-resolution is possible and our estimates converge to the true values at an inverse polynomial rate in terms of the magnitude of the noise. This is tight in that if $m = (1-\epsilon)/\Delta$ there is a pair of $k$ point sources $x$ and $x'$ each with separation $\Delta$ where we would need the noise to be exponentially small to tell them apart. 

Along the way, we also prove a sharp phase transition for the condition number of the Vandermonde matrix based on connections to certain extremal functions studied in analytic number theory. To elaborate on this, one can think of a hypothesis testing version of various learning problems. Then in many settings we need a family of test functions that are strong enough to distinguish any pair of hypothesis with suitably different parameters. For mixtures of Gaussians, low degree polynomials work \cite{KMV}, \cite{MV2}, \cite{BS2} but we can think of the present paper as fitting into a broader agenda where the goal is to find more creative choices for these test functions which in turn lead to better algorithms (see also \cite{MS} where a similar framework is used to design algorithms for the population recovery problem).

\subsection{Previous Work and Connections}

Recall that there are a variety of algorithms for super-resolution that work for $m = k$ in the noise-free case. We review one popular technique called the {\em matrix pencil method} in Section~\ref{sec:mpm} where the idea is to express the unknown parameters as the solution to a generalized eigenvalue problem. There have been numerous attempts to analyze the noise tolerance of these methods, but all of them pass to limiting arguments as $m$ tends to infinity and yield ineffective bounds. The main bottleneck is that this approach relies on solving linear systems in Vandermonde matrices, which can be very poorly conditioned. In fact, many other algorithms for inverse problems that use closely related techniques work in the exact case but are either hopeless or require very different settings of their parameters in the presence of noise. For example:

\begin{itemize}

\item[(a)] Given the evaluation of a degree $k$ polynomial $p(x)$ at $k + 1$ distinct points we can solve for the coefficients of $p(x)$ and then we can evaluate $p(x)$ anywhere. But if we are given the values of $p(x)$ up to some additive noise $\eta$ even for {\em all} values of $x \in [0, 1]$ we can only hope to estimate $p(2)$ within an additive $c(2 + \sqrt{3})^k \eta$. In fact this is tight by standard results in approximation theory \cite{C}. 

\item[(b)] In sparse recovery, there is an unknown $s$-sparse signal $x$ and we observe $A x + \eta$. If $\eta = 0$ and $A$ is chosen to be the first $2s$ rows of the discrete Fourier transform matrix we can recover $x$ exactly using Prony's method. But $A$ must have at least $\Omega( s \log n/s)$ rows in order to recover $x$ in the presence of some moderate amount of noise \cite{DIPW}.

\end{itemize} 

\noindent We could wonder whether super-resolution suffers from the same type of extreme noise-intolerance that makes it possible in the exact case, but requires (say) $\eta_\ell$ to be {\em exponentially} small in order to recover the $u_j$'s and $f_j$'s to any reasonable accuracy.

Prior to the more modern work on super-resolution, there was a considerable effort to develop empirical methods that work when there is some a priori information about the sparsity of the signal (see references in \cite{D}). In a remarkable work, Donoho formulated the notion of the {\em Rayleigh index} which can be thought of as a continuous analogue to separation, and studied super-resolution where the $f_j$'s are restricted to be on a grid \cite{D}. Donoho used Beurling's theory of interpolation \cite{B} to prove that any $\Delta$-separated signal can be uniquely recovered from from its Fourier transform on the continuous range $[-1/\Delta, 1/\Delta]$. However this result is information theoretic in the sense that there could be two distinct signals that are both $\Delta$-separated, but whose Fourier transforms restricted to $[-1/\Delta, 1/\Delta]$ differ in a measure zero set. (Hence they are not noticeably different). Donoho also gave results on the modulus of continuity for the recovery of $\Delta$-separated signals, but these results lose an extra factor of two in the range of low-frequency measurements that are needed. The main question left open by this work was to prove recovery bounds when the signal is not restricted to be on a grid.

Subsequently, Candes and Fernandez-Granda \cite{CFG1, CFG2} proposed an exciting new method for solving this inverse problem based on convex programming. 
The authors proved a variety of results in the setting where (a) there is no noise or (b) there is noise, but the $f_j$'s are restricted to be on a grid. 
In the noise-free setting, the authors showed
their approach recovers the $u_j$'s and $f_j$'s exactly provided that $m \geq \frac{2}{\Delta}$ and $m \geq 128$ where $\Delta$ is the minimum separation. The authors further improved this bound when the $u_j$'s are real to a separation condition of $m \geq \frac{1.87}{\Delta}$ (unpublished)\footnotemark[1]. In the grid setting, the authors defined a parameter $C$ called the {\em super-resolution factor} and proved various bounds using it on the error of their estimator. However the approach used to measure the error is non-standard since it uses the Fejer kernel to `mask' the error on the high-frequency parts of the signal. Finally, Ferndanez-Granda \cite{F} returned to the setting where the $f_j$'s can be at arbitrary locations, and showed that if the noise is small enough the estimate becomes localized around the true spikes. After the initial publication of our work, we became aware of the recent work of Liao and Fannjiang \cite{LF} who analyze the MUSIC algorithm. Their approach is also based on upper-bounding the condition number of the Vandermonde matrix and works with a separation condition of $m \geq (1+C(\Delta))/\Delta$ where $C(\Delta)$ goes to zero as $\Delta$ increases. We will prove a sharper bound that is optimal for all $\Delta$, through extremal functions. 

\footnotetext[1]{One should note though, that the methods we described above work in the noise-free setting for $m = k$ without any separation condition. So the crucial issue really is in understanding when super-resolution is possible in the presence of noise.}

\paragraph{Further Connections} In fact, there are a number of connections  \---- both conceptual and technical \---- to the rich literature on the sparse Fourier transform (SFFT). We can interpret\footnotemark[2] recent works as providing algorithms for approximately recovering a signal that is the super-position of $k$ point sources at discrete locations, even in the presence of noise, from few measurements \cite{GGIMS, GMS} and state-of-the-art algorithms run in time $O(k \log k \log n/k)$ \cite{HIKP}. What if these point sources are not at discrete locations and are allowed to be {\em off the grid}? In fact, Bhaskar et al. \cite{BTR} defined a problem called {\em compressed sensing off the grid} where we are given a subset of the low frequency measurements. More precisely, we are given $v_\ell$ for all $\ell \in S \subset \{-m, -m+1, ..., m-1, m\}$ where $|S|$ is roughly $O(k \log m/k)$. The authors proved that if $S$ is chosen uniformly at random and $m \geq 2/\Delta$, the convex program recovers the $u_j$'s and $f_j$'s exactly. Additionally, the authors needed a technical assumption that each $u_j \in \{\pm 1\}$ and is chosen uniformly at random. However these algorithms are only known to work in the noise-free setting, and it remains an interesting open question whether compressed sensing off the grid can be solved in the presence of noise, with similar guarantees to what we are able to obtain here.

\footnotetext[2]{The guarantees in these works are even stronger, since the usual goal is to recover a $k$-sparse approximation to the discrete Fourier transform (DFT) whose error is within a $(1+\epsilon)$ factor of the error of the {\em best} $k$-sparse approximation. In particular, these algorithms work even when the signal is not particularly close to a sum of $k$ sinusoids.}

\subsection{Our Results and Techniques}


Here we show the first noise tolerance bounds for the {\em matrix pencil method} (see Section~\ref{sec:mpm}) and we use this to give algorithms for noisy super-resolution with the {\em optimal} relationship between the number of measurements and the separation. 
Our key technical ingredient is giving a tight characterization of when the Vandermonde matrix $V_m^k(\alpha_1, \alpha_2, ..., \alpha_k)$ (see Section~\ref{sec:mpm}) is well conditioned. Set $\alpha_j = e^{i 2 \pi f_j }$. Recall that $\Delta$ is the minimum separation of the $f_j$'s according to the wrap-around metric on the unit interval. 

\begin{theorem}
The condition number $\kappa$ of $V_m^k(\alpha_1, \alpha_2, ..., \alpha_k)$ satisfies $\kappa^2 \leq \frac{m + 1/\Delta - 1}{m - 1/\Delta -1}$ provided that $m > \frac{1}{\Delta} + 1$. 
\end{theorem}

This theorem is easy to prove, given the right extremal functions. The key is the Beurling-Selberg majorant and minorant\footnotemark[3] (see Section~\ref{sec:beur}) which majorize and minorize the sign function on the real line but have a compactly supported Fourier transform \cite{Se}. In fact, these functions are the solutions to their respective extremal problems and have led to a sharp forms of various inequalities in analytic number theory. Most notably these functions can be used to give simple, direct proofs of the large sieve inequality (see \cite{V}) and extensions of Hilbert's inequality \cite{MV}. These inequalities yield strong upper and lower bounds for integrals of sums of exponentials, and here we adapt the techniques to our setting to reason about evaluating sums of exponentials at discrete sets of points. This readily yields our upper bound on the condition number. 

\footnotetext[3]{In fact, the minorant is the more important of the two extremal functions for our purposes. It is the opposite for most of the standard applications, where the majorant plays the key role and the minorant is a footnote.}

Indeed it is the condition number of $V_m^k$ that governs whether or not the matrix pencil method is stable. Despite its widespread use in signal processing, this is the first {\em effective} bound on its noise tolerance and we get immediate consequences for super-resolution. Throughout this paper we will quantify the error between our estimates $\{ (\widehat{u}_j, \widehat{f}_j)\}$ and the true values $\{(u_j, f_j)\}$ based on the best matching between them:
$$\min_{\pi} \max_j \max\Big( \| \widehat{u}_j - u_{\pi(j)}\|, \|\widehat{f}_j - f_{\pi(j)}\| \Big)$$
where $\pi$ is a permutation. In particular, we refer to our estimates as being $\epsilon$-close to the true parameters if there is a matching of the $k$ spikes in our estimate to the true $k$ spikes so that across the matching we get both the coefficients and the locations correct within an additive $\epsilon$. 

\begin{theorem}
If the cut-off frequency satisfies $m > 1/\Delta + 1$ where $\Delta$ is the minimum separation (according to the wrap-around metric), then there is a polynomial time algorithm to recover the $u_j$'s and $f_j$'s whose estimates converge to the true values at an inverse polynomial rate in terms of the magnitude of the noise.
\end{theorem}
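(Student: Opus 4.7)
The plan is to instantiate the matrix pencil method (referenced in Section~\ref{sec:mpm}) and then carry out a perturbation analysis whose quantitative heart is Theorem 1.1. Concretely, I would organize the $2m+1$ noisy measurements $v_\ell$ into two Hankel matrices $H_0$ and $H_1$ of size roughly $(m+1) \times (m+1)$, where $[H_0]_{ij} = v_{i+j-2}$ and $[H_1]_{ij} = v_{i+j-1}$. In the noiseless case these factor as
\[
H_0 = V^T D V, \qquad H_1 = V^T D \Lambda V,
\]
where $V = V_m^k(\alpha_1,\ldots,\alpha_k)$ is the Vandermonde matrix in $\alpha_j = e^{i 2\pi f_j}$, $D = \mathrm{diag}(u_j)$, and $\Lambda = \mathrm{diag}(\alpha_j)$. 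The $\alpha_j$'s are then exactly the nonzero generalized eigenvalues of the pencil $(H_1, H_0)$, or equivalently the eigenvalues of $M := \Sigma_0^{-1/2} U_0^T H_1 W_0 \Sigma_0^{-1/2}$ where $H_0 = U_0 \Sigma_0 W_0^T$ is the rank-$k$ truncated SVD.

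In the noisy case $H_0$ and $H_1$ are perturbed by matrices of spectral norm at most $O(m \|\eta\|_\infty)$. I would push this perturbation through the construction of $M$ and invoke a Bauer--Fike / Davis--Kahan style bound to control the displacement of its eigenvalues. The resulting estimate degrades in two places: (i) it is divided by $\sigma_k(H_0)$, which can be lower bounded by $\sigma_k(V)^2 \min_j |u_j|$, and (ii) it picks up a factor inversely proportional to the spectral gap of the $\alpha_j$'s, which is at least $\Omega(\Delta)$ by the separation hypothesis. Both of the resulting factors are controlled under $m > 1/\Delta + 1$: Theorem 1.1 gives $\kappa(V) = O\bigl(\sqrt{(m+1/\Delta-1)/(m-1/\Delta-1)}\bigr)$, which in particular bounds $\sigma_k(V)$ from below, and the gap is explicit from the separation. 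Combining these gives $|\widehat{\alpha}_j - \alpha_{\pi(j)}| = O(\mathrm{poly}(m, 1/\Delta, 1/\min_j|u_j|) \cdot \|\eta\|_\infty)$ after an appropriate matching $\pi$, and taking complex arguments yields the same rate for $\widehat{f}_j$.

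Once the locations are pinned down, I would recover the coefficients by forming the approximate Vandermonde matrix $\widehat{V} = V_m^k(\widehat\alpha_1,\ldots,\widehat\alpha_k)$ and solving the overdetermined linear system $\widehat{V}\, \mathbf{u} \approx \mathbf{v}$ in least squares. Since the $\widehat\alpha_j$ inherit a separation close to $\Delta$ for sufficiently small noise, Theorem 1.1 applies to $\widehat{V}$ as well; standard least-squares perturbation bounds then give $|\widehat u_j - u_{\pi(j)}|$ at the same inverse polynomial rate in $\|\eta\|_\infty$, establishing the theorem.

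The step I expect to be the main obstacle is the perturbation analysis for $M$. Textbook bounds for generalized eigenvalues of rank-deficient pencils are typically either ineffective (implicit limits as $m \to \infty$, as the paper warns is endemic in prior work) or blow up in the presence of small singular values or close eigenvalues, and here I must turn them into explicit polynomial dependence on $m$, $1/\Delta$, $k$, and $1/\min_j |u_j|$. The critical enabling fact is that we only interact with the top-$k$ subspace of $H_0$, on which the smallest singular value is already quantitatively lower bounded by Theorem 1.1; executing this pruning carefully is where the work lies.
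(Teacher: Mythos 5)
Your proposal follows essentially the same route the paper takes: form the Toeplitz/Hankel pencil from the measurements, project onto the top-$k$ singular subspace to reduce to a regular order-$k$ (generalized) eigenvalue problem, push the noise through via Wedin/Davis--Kahan type subspace bounds plus a Bauer--Fike/Gershgorin type eigenvalue perturbation bound controlled by the condition number of $V_m^k$ from the main theorem, and then solve a least-squares system against $\widehat V$ for the coefficients. The paper's concrete instantiation differs only in minor ways (it keeps a generalized pencil $(\widehat U^H \widetilde A \widehat U,\ \widehat U^H \widetilde B \widehat U)$ rather than whitening to an ordinary eigenvalue problem, and invokes a Stewart--Sun Gershgorin-type theorem in the chordal metric rather than Bauer--Fike), and you have correctly located the technical heart of the argument in the perturbation analysis of $M$.
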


\noindent We will state the precise quantitative bounds later (see Theorem~\ref{thm:withnoise}). These results are tight (as we will describe next) and settle the question of what is the best relationship between the number of measurements and the minimum separation that we can achieve for super-resolution. Notice that this theorem has none of the restrictions that were needed in some of the previous work; there are no restrictions on the $f_j$'s being on a grid, or on the coefficients. 

As we noted earlier, if $k = 1/\Delta$ and $m < 1/\Delta$ then the true signal $x$ is not necessarily the sparsest solution to the inverse problem. This is related to the fact that $V_m^k$ is not full rank. However it is easy to give more compelling lower bounds where $k$ is itself much smaller than $m$. Candes and Fernandez-Granda \cite{CFG1} proved almost what we need, and gave an $x$ that is $\Delta$-separated where $\int_E \|\widehat{x}(t)\|^2 \mbox{ } dt < 2^{-\Omega(\epsilon k)}$ for $E = [-m/2, m/2]$ and $m = (1-\epsilon)/\Delta$. Their approach was based on the asymptotics of Slepian's discrete prolate spheroidal wave functions which were developed in a sequence of papers \cite{Sl1, Sl2, Sl3, Sl4, Sl5}. Here we give a discrete version of the above result, and give an elementary and self-contained proof. First, we prove:



\begin{theorem}\label{thm:need}
If $m < (1-\epsilon)/\Delta$ and $k = \Omega(\log m)$, the condition number of $V_{m}^k(\alpha_{k/2}, \alpha_{k/2 + 1}, ..., \alpha_{3k/2 -1})$ is at least $2^{\Omega(\epsilon k)}$.
\end{theorem}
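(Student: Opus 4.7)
The plan is to reduce the claim to a polynomial concentration problem on the unit circle and then build an extremal polynomial via a discrete prolate-spheroidal-style construction. Specializing to $f_j = j\Delta$ and setting $\al = e^{2\pi i \Delta}$, we have $\alpha_j = \al^j$ for $j = k/2, \ldots, 3k/2 - 1$. For any $c \in \C^k$, associate the polynomial $Q(z) = \sum_{i=0}^{k-1} c_{i+k/2}\, z^i$ of degree $\leq k-1$. Since $|\al^{k\ell/2}| = 1$, a direct computation gives $\|Vc\|_2^2 = \sum_{\ell=-m}^m |Q(\al^\ell)|^2$, and Parseval applied to polynomials on the unit circle yields $\|c\|_2^2 = \sum_{i=0}^{k-1} |c_{i+k/2}|^2 = \frac{1}{2\pi}\int_0^{2\pi} |Q(e^{i\theta})|^2 d\theta =: \|Q\|_2^2$. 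Each column of $V$ has squared Euclidean norm $2m+1$, so $\sigma_{\max}(V)^2 \geq 2m+1$, and under $k = \Omega(\log m)$ any polynomial-in-$m$ factor is absorbed into the target exponent $2^{\Omega(\eps k)}$. Thus the theorem reduces to exhibiting a polynomial $Q$ of degree $\leq k-1$ with $\|Q\|_2 = 1$ and $\sum_{\ell=-m}^m |Q(\al^\ell)|^2 \leq 2^{-\Omega(\eps k)}$.

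For the construction, the strategy is the discrete analogue of the prolate-spheroidal argument used in the continuous setting by Candes and Fernandez-Granda. The $k \times k$ Hermitian Toeplitz matrix $M = V^*V$ has entries $M_{jj'} = D_m((j-j')\Delta)$, where $D_m$ is the Dirichlet kernel; its trace is $k(2m+1)$ and its eigenvalues are exactly the $\sigma_j(V)^2$. I would establish a decomposition $M = M_{\mathrm{main}} + M_{\mathrm{tail}}$, where $M_{\mathrm{main}}$ is Hermitian PSD of effective rank at most $2(1-\eps)k + O(1)$ inherited from a bandlimited majorant of the Dirichlet kernel, and $\|M_{\mathrm{tail}}\|_{\mathrm{op}} \leq (2m+1) \cdot 2^{-\Omega(\eps k)}$. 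Combined with Weyl eigenvalue interlacing, this forces $M$ to have at least $2\eps k - O(1)$ eigenvalues bounded by $\|M_{\mathrm{tail}}\|_{\mathrm{op}}$, and the extremal polynomial $Q$ is recovered as a unit eigenvector in the resulting near-kernel subspace.

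The low-rank / small-tail decomposition of $M$ is built using a sharp Beurling-Selberg majorant adapted to the discrete setting: one picks a trigonometric polynomial $h$ whose Fourier transform is supported on $[-m - O(1), m+O(1)]$ and whose values on the lattice $\Delta \cdot \Z$ agree with those of the Dirichlet kernel up to an error dictated by the margin $\eps$ in $m\Delta < 1 - \eps$. This mirrors, in the ``opposite direction,'' the use of the Beurling-Selberg minorant in the paper's main upper bound (Theorem~1.1): the minorant there yields the sharp condition number bound $(m + 1/\Delta - 1)/(m - 1/\Delta - 1)$, while here the majorant is exploited to cap the effective rank of the Toeplitz matrix and thereby force an exponential number of small singular values.

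The main obstacle will be making the sharp Beurling-Selberg approximation quantitative enough to extract the linear-in-$\eps$ exponent $\Omega(\eps k)$, rather than the easier quadratic $\Omega(\eps^2 k)$ that a naive single-point concentration construction would yield. This requires a careful analysis of the extremal majorant near the constraint $m\Delta = 1 - \eps$, and is why the proof, although ``elementary and self-contained,'' still hinges on nontrivial extremal function machinery that is the discrete mirror image of the tools used for the matching upper bound.
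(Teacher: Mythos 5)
Your Parseval reduction---recasting the claim as exhibiting a degree-$(k-1)$ polynomial $Q$ with $\|Q\|_{L^2}=1$ but $\sum_{\ell=-n/2}^{n/2}|Q(\alpha^\ell)|^2 \le 2^{-\Omega(\epsilon k)}$---is a correct and essentially equivalent framing of what has to be shown. But the heart of your plan, the decomposition $M = M_{\mathrm{main}} + M_{\mathrm{tail}}$ of the Gram matrix with an effective-rank cap and an exponentially small tail, is left as a proposal, and it is exactly where the work lies. You are in effect reconstructing the Slepian/Landau prolate-spheroidal route, which is the approach of Candes and Fernandez-Granda that the paper explicitly sets out to \emph{avoid} in favor of an elementary, self-contained argument.

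The paper's proof is direct rather than spectral: it writes down the near-kernel vector $u$ in closed form. With $K_\ell$ the normalized, period-one Fejer kernel, $\ell = 4/\epsilon$, and $r = (k-1)/(2\ell)$, it sets $H(x) = K_\ell^r(x/m)\,e^{i\pi x}$ and takes $u$ to be the vector of Fourier coefficients $h_j$ of $H$. The coefficients of $\widehat{K_\ell^r}$ form a probability distribution on $\{-r\ell,\dots,r\ell\}$, the modulation shifts the support to $\{k/2,\dots,3k/2-1\}$, and $\|u\|_1 = 1$ gives $\|u\|_2 \ge 1/\sqrt{k}$. Each coordinate of $Vu$ equals $H(j)$ for an integer $|j|\le n/2$; since the modulation recenters the peak of $K_\ell^r$ away from the observed window, the Fejer decay bound $K_\ell^r(y) \le (4\ell^2 y^2)^{-r}$ gives $\|Vu\|_\infty \le 4^{-2r} = 2^{-\Omega(\epsilon k)}$. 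No eigenvalue accounting is needed.

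The concrete gap in your route is that the Beurling--Selberg machinery is probably the wrong tool for the step you have not carried out. You need a bandlimited $h$ that approximates the Dirichlet-type kernel with \emph{exponentially} small pointwise (hence operator-norm) error, $2^{-\Omega(\epsilon k)}$. But the Beurling--Selberg majorant/minorant is extremal in an $L^1$ sense---the defect $\int (C_E - I_E)$ equals a fixed $1/\Delta$, not an exponentially small quantity---so it does not by itself force $\|M_{\mathrm{tail}}\|_{\mathrm{op}}$ down to the required scale. Extracting exponential eigenvalue decay this way essentially requires invoking Slepian's nontrivial asymptotics for prolate eigenvalues, which is precisely the non-elementary ingredient the paper's Fejer-kernel construction sidesteps. (As a smaller point, the time--bandwidth heuristic here gives effective rank $\approx (1-\epsilon)k$, not $2(1-\epsilon)k$: there are $\approx (1-\epsilon)m$ rows and the frequency window $\{j/m : k/2 \le j < 3k/2\}$ has length $\approx k/m$.) So while the Weyl-interlacing scaffolding is sound, the decomposition it rests on is not produced, and the tool you name for producing it appears insufficient for the exponent you need.
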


\noindent It turns out that this gives us a pair of point sources that are each $2\Delta$-separated, where setting the cut-off frequency $m = (1-\epsilon)/2\Delta$ we would need the magnitude of the noise to be exponentially small in order to tell them apart. Hence we cannot even solve the following hypothesis testing variant in the presence of noise:

\begin{corollary}[Informal]
If the cut-off frequency satisfies $m < (1-\epsilon)/\Delta$ where $\Delta$ is the minimum separation (according to the wrap-around metric), then there is a pair $x$ and $x'$ of $k$ point sources each with separation $\Delta$ where we would need $|\eta_\ell| \leq 2^{-\Omega(\epsilon k)}$ in order to tell $x$ and $x'$ apart.
\end{corollary}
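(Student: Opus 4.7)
The plan is to convert the condition number lower bound of Theorem~\ref{thm:need} into the claimed hypothesis testing indistinguishability. First I would turn that condition number bound into a bound on the smallest singular value. Since every entry of $V_m^k$ has unit modulus, we have $\sigma_{\max}(V_m^k) \le \|V_m^k\|_F = \sqrt{k(2m+1)}$. Combining this with $\kappa(V_m^k) \ge 2^{\Omega(\epsilon k)}$ yields $\sigma_{\min}(V_m^k) \le \sqrt{k(2m+1)} \cdot 2^{-\Omega(\epsilon k)} = 2^{-\Omega(\epsilon k)}$, where the polynomial factor is absorbed into the exponent using the hypothesis $k = \Omega(\log m)$ from Theorem~\ref{thm:need}. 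Thus there is a unit vector $c \in \mathbb{C}^k$ with $\|V_m^k c\|_2 \le 2^{-\Omega(\epsilon k)}$.

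Next I would use $c$ to build the two hard signals. Choose any vector $b \in \mathbb{C}^k$ such that both $b$ and $c + b$ have all entries nonzero; for instance $b_j \equiv 2$ works since $\|c\|_\infty \le 1$. Define
\[
x(t) \;=\; \sum_{j = k/2}^{3k/2 - 1} (c_j + b_j)\, \delta_{f_j}(t), \qquad x'(t) \;=\; \sum_{j = k/2}^{3k/2 - 1} b_j\, \delta_{f_j}(t).
\]
Both $x$ and $x'$ consist of exactly $k$ point sources supported on $\{f_{k/2}, \dots, f_{3k/2 - 1}\}$, which is $\Delta$-separated by construction. Their noise-free measurement vectors differ by $V_m^k(c + b) - V_m^k b = V_m^k c$, so
\[
\sum_{\ell = -m}^{m} |v_\ell(x) - v_\ell(x')|^2 \;=\; \|V_m^k c\|_2^2 \;\le\; 2^{-\Omega(\epsilon k)}.
\]
Setting the adversarial noise $\eta_\ell = v_\ell(x') - v_\ell(x)$ makes the noisy measurements of $x$ coincide exactly with the noise-free measurements of $x'$, and this $\eta$ satisfies $\|\eta\|_\infty \le \|\eta\|_2 \le 2^{-\Omega(\epsilon k)}$. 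Hence no estimator, regardless of its computational power, can tell the two signals apart at this noise level.

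The main subtlety is that the bad direction $c$ produced by Theorem~\ref{thm:need} is an arbitrary complex vector over which we have little structural control. If the problem required real or nonnegative coefficients, one would instead split $c = c^+ - c^-$ and place $c^+$ and $c^-$ on disjoint subsets of the grid, which is precisely the $2\Delta$-separated construction alluded to in the remark following Theorem~\ref{thm:need} and costs a factor of two in separation (and effectively halves $k$). Because complex coefficients are allowed in the problem formulation, the additive offset $b$ avoids this loss and yields signals with exactly the parameters claimed. The only other thing to verify is that the $\sqrt{k(2m+1)}$ factor from the Frobenius bound really is negligible compared to $2^{\Omega(\epsilon k)}$, which is a routine consequence of the $k = \Omega(\log m)$ hypothesis.
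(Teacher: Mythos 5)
Your core mechanism is the same as the paper's: use Theorem~\ref{thm:need} to extract a unit vector $c$ with $\|Vc\|_2 \leq 2^{-\Omega(\epsilon k)}$, then realize the difference of two point-source signals as $Vc$ so that an adversarial noise of the same magnitude makes them indistinguishable. However, your specific construction introduces a factor-of-two loss in the constant and does not in fact prove the corollary as stated, except for $\epsilon > 1/2$.

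Here is the issue. Theorem~\ref{thm:need} places the $k$ spikes at $f_j = j/m$, so the grid spacing is $1/m$, and it controls $V_{-n/2,n/2}^k$ with $n = (1-\epsilon)m$ rows, i.e.\ cutoff frequency $n/2 = (1-\epsilon)m/2$. Your two signals share the support $\{f_{k/2}, \dots, f_{3k/2-1}\}$, so their separation is the grid spacing $\Delta = 1/m$, which means your cutoff satisfies $n/2 = (1-\epsilon)/(2\Delta)$ --- a factor of two short of the claimed $m < (1-\epsilon)/\Delta$. Restated in the corollary's $\epsilon$, you can only reach $\epsilon' > 1/2$ (set $\epsilon = 2\epsilon' - 1$), and the exponent $\Omega(\epsilon k)$ degrades accordingly as $\epsilon' \to 1/2$. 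The paper's odd/even split is not a workaround for sign constraints, as your final paragraph suggests; it is the device that recovers the lost factor of two. By giving the two signals \emph{disjoint} supports (odd-indexed and even-indexed grid points respectively), each signal has separation $2/m$, twice the grid spacing, so that cutoff $n/2 = (1-\epsilon)m/2 = (1-\epsilon)/\Delta$ exactly, which is what the corollary requires. In short: your additive offset does not ``avoid the loss'' --- it trades the effective halving of $k$ for a halving of the cutoff-to-separation ratio, and only the latter appears in the corollary statement, so the paper's choice is the right one. To repair your argument you would have to adopt the paper's odd/even decomposition (placing $c$'s odd-indexed entries in $x$ and the negated even-indexed entries in $x'$, with both sets inheriting separation $2/m$), which is precisely what the paper does.
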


\noindent This proves the optimality of our algorithms in a strong sense. In fact, this impossibility result holds even in a stochastic model, where the noise $\eta_\ell$ has its real and complex part chosen as Gaussian independent random variables with variance $2^{-\Omega(\epsilon k)}$. We state the precise result later (see Corollary~\ref{corr:impossible}). 


We can think about our use of majorants and minorants as being a ``two-function method" where we upper and lower bound some function we are interested in \---- in this case $\|V_m^k u\|_2^2$. 
We develop another framework which we call a ``one-function method" in which we prove upper bounds on the condition number of $V_m^k$ in an alternate manner. For example, the proof of the Salem inequality (see Section~\ref{sec:salem}) relies on lower bounding some integral we are interested in as: $$\int_{-\infty}^{\infty} \chi(t) \|f(t)\|^2 \mbox{ } dt  \leq \int_E \|f(t)\|^2 \mbox{ } dt $$ where $\chi(t)$ is an appropriately chosen ``test function" that is supported on the interval $E$. If additionally $\widehat{f}$ is well-spaced and $\widehat{\chi}$ decays quickly, then we can write the left hand side as a relatively simple sum-of-squares plus some correction terms. These same sorts of techniques have been applied in the sparse Fourier transform literature, where a combination of {\em hashing} and {\em filtering} is used to identify large coefficients \cite{GGIMS, GMS, HIKP}. 

Our key observation is that if we transfer these types of inequalities to the discrete setting, the test function itself can be thought of as a way to re-weight the rows of $V_m^k$ to precondition it. This is a very different setting than usual because we do not know $V_m^k = V_m^k(\alpha_1, \alpha_2, ..., \alpha_k)$ \---- the matrix we want to precondition \---- only a family that it belongs to. We call this {\em universal preconditioning}, which seems to be an interesting notion in its own right. The key point is that $\chi(t)$ is an universal preconditioner for any $V_m^k$ where the $f_j$'s are well-separated, and indeed we will use it to design faster algorithms for super-resolution. Here we restrict to the case where each $u_j =1$ to keep things simple. Our final result is:

\begin{theorem}\label{thm:last}
If the cut-off frequency satisfies $m > \Omega( (1/\Delta) \log 1/\epsilon)$ where $\Delta$ is the minimum separation (according to the wrap-around metric) and each $u_j = 1$, there is an $\widetilde{O}(m^2)$ time algorithm to recover the $f_j'$'s up to accuracy $\epsilon$, provided that the noise satisfies $\|\eta_j\| \leq \epsilon^2/(4k)$ for each $j$. 
\end{theorem}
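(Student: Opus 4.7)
The plan is to exploit the one-function/universal-preconditioning viewpoint introduced just before the theorem. When each $u_j = 1$ the natural object to study is the ``dual polynomial''
\[
p(f) \;=\; \frac{1}{\widehat{\chi}(0)} \sum_{|\ell|\leq m} \chi_\ell\, v_\ell\, e^{-2\pi i f \ell},
\]
where $\chi$ is the discrete Salem-type test function from Section~\ref{sec:salem}. Substituting $v_\ell = \sum_j e^{2\pi i f_j \ell} + \eta_\ell$ gives
\[
p(f) \;=\; \sum_{j=1}^{k} \frac{\widehat{\chi}(f-f_j)}{\widehat{\chi}(0)} \;+\; \frac{1}{\widehat{\chi}(0)}\sum_{|\ell|\leq m}\chi_\ell\,\eta_\ell\, e^{-2\pi i f \ell},
\]
so up to a filtered noise term, $p$ is a sum of $k$ identical bumps centered at the unknown $f_j$'s, and the recovery task reduces to locating the peaks of a smooth trigonometric polynomial.

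I would choose $\chi$ so that (a) $\chi_\ell\ge 0$ with $\|\chi\|_1 = O(m)$, (b) the normalized bump $\widehat{\chi}/\widehat{\chi}(0)$ decays rapidly enough in $m|f|$ that $\widehat{\chi}(\Delta)/\widehat{\chi}(0) \leq \epsilon^3$ whenever $m \geq C(1/\Delta)\log(1/\epsilon)$ for a sufficiently large absolute constant $C$, and (c) the curvature at the top satisfies $|\widehat{\chi}''(0)/\widehat{\chi}(0)| = \Theta(m^2)$. Property (a) combined with the noise hypothesis $|\eta_\ell|\leq \epsilon^2/(4k)$ bounds the noise contribution to $p$ uniformly by $O(\epsilon^2)$. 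Property (b) controls cross-talk: for any $f$ at distance at least $\Delta/2$ from all true $f_j$ the signal term is at most $k\cdot\epsilon^3 + O(\epsilon^2) \leq \tfrac14$, while within a $\Theta(1/m)$-window of each $f_j$ the signal term looks like $1 - \Omega(m^2)(f-f_j)^2$ up to exponentially small cross-terms.

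The algorithm is then a coarse-then-fine peak hunt. First, evaluate $p$ on a uniform grid of $M = \Theta(m)$ points via a single FFT in time $O(m\log m)$; by the previous paragraph, the $k$ local maxima of magnitude at least $3/4$ are unique within each $\Delta$-window and lie within $O(1/m)$ of the true frequencies. Second, for each of the $k\leq 1/\Delta$ candidate peaks run a constant number of Newton steps on $p'(f)$, each step requiring only a direct $O(m)$-time evaluation of $p$ and $p'$ at a single point. Because the bump is strongly concave at its top and the additive perturbation is only $O(\epsilon^2)$, Newton converges to within $O(\epsilon^2/m) \leq \epsilon$ of the true $f_j$. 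The total running time is $O(m\log m) + k \cdot \widetilde{O}(m) = \widetilde{O}(km) \leq \widetilde{O}(m^2)$, since $k \leq 1/\Delta \leq m$.

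The main obstacle is property (b), the sharp Fourier-side decay of the discrete test function. The Salem/Beurling--Selberg machinery underlying the one-function method gives excellent $L^1$ versus $L^\infty$ tradeoffs and tight separation-dependent sum estimates, but typically only polynomial decay of $\widehat{\chi}$; the exponential-in-$m|f|$ decay that the scaling $m\asymp (1/\Delta)\log(1/\epsilon)$ demands will most likely require iterated convolutions (e.g.\ Jackson-type powers of the base majorant/minorant) and then absorbing the resulting $\|\chi\|_1$ blow-up into the constant $C$. A related subtlety is that the cross-term bound must be \emph{universal} over all $\Delta$-separated configurations, which is precisely where the separation hypothesis enters\textemdash mirroring how it entered the upper bound on the condition number of $V_m^k$ in the earlier theorem.
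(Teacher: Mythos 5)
Your proposal is substantially the same approach as the paper's: a compactly supported test function built from iterated Fejer convolutions, cross-term bounds driven by that kernel's decay, and a coarse-to-fine peak search. Your dual polynomial $p(f)$ is, after expanding the square, an affine transform of the paper's potential $F(z)=\sum_j c'_j\|f(j)-e^{2\pi ijz}\|^2 = T - 2\sum_j K_\ell^r(z-f_j)$, and the iterated-convolution test function you anticipate is exactly $K_\ell^r$ with $\ell\approx 1/\Delta$ and $r=\Theta(\log 1/\epsilon)$, which is what the paper instantiates via Corollary~\ref{cor:salem} and Lemma~\ref{lemma:structure}.

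Where you genuinely depart is the refinement step, and this is where the proposal has a gap. The paper's {\sc IterativeRefinement} is a purely zeroth-order geometric grid search: its correctness needs only the two-sided sandwich bounds on $\sum_j K_\ell^r(z-f_j)$ as a function of $d_w(z,\{f_j\})$ from Lemma~\ref{lemma:structure}, and never touches a derivative. You propose Newton iteration on $p'$. That would be faster, and your FFT coarse grid would beat the paper's $O(m^2)$ initialization, but it creates an obligation you do not discharge: you must show $p''$ is bounded away from zero uniformly on a basin of radius $\Theta(1/m)$ around each $f_j$, after accounting for the second-derivative contributions of the cross-terms and of the filtered noise (the latter is as large as $\Theta(m^2)\|\eta\|_\infty$, which is the same order as the signal's Hessian up to the $\epsilon^2 r/k$ factor, so it must be tracked), and you need a basin-of-attraction argument for Newton. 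This requires second-derivative analogues of Lemma~\ref{lemma:local} and Corollary~\ref{cor:salem} that the paper deliberately avoids needing. Two smaller calibration errors: the local curvature constants in Lemma~\ref{lemma:local} differ by a factor of $r$ (this spread is exactly why the paper's grid contracts by $\sqrt{Cr/c}$ per round), so the curvature is between $\Theta(m^2/r^2)$ and $\Theta(m^2/r)$, not a single $\Theta(m^2)$; and the ``$\ge 3/4$ near a spike versus $\le 1/4$ far from all spikes'' threshold should be calibrated at distance $1/\ell=\Delta$, not $\Delta/2$, because the decay bound $K_\ell(x)\le 1/(4\ell^2 x^2)$ used throughout the paper is vacuous at $x=1/(2\ell)$ and getting the $\Delta/2$ threshold would need a sharper pointwise estimate on $K_\ell$ there.
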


\noindent We prove this result by using $\chi(t)$ to construct a function $F(z)$ whose local minima are $\epsilon$-close to the $f_j$'s. Moreover we establish that $F(z)$ is approximately strongly convex in a neighborhood around each $f_j$, and we give a variant of gradient descent that finds all of them in parallel. 
This approach is much faster than trying to solve the convex program in \cite{CFG1} or finding {\em all} the generalized eigenvalues for a pair of $m \times m$ Toeplitz matrices as in the matrix pencil method. 
We remark that least-squares is a popular approach to solve super-resolution, but most analyses  pass to a limiting argument and use the fact that as $m$ goes to infinity, the (re-normalized) columns of $V_m^k$ become orthogonal. Our approach instead is to precondition $V_m^k$ and we believe that this idea of using test functions (i.e. filters, kernels) as a preconditioner may have other applications in designing faster algorithms for other basic tasks in signal processing.

\section{The Optimal Separation Criteria}

\subsection{The Matrix Pencil Method}\label{sec:mpm}

Here we review the matrix pencil method, which is a basic tool in statistics, signal processing and learning. The basic idea is to set up a generalized eigenvalue problem whose solutions are the unknown parameters. In particular, in the setting of super-resolution we are given 
$v_\ell = \sum_j u_j e^{i 2 \pi  f_j \ell} + \eta_\ell$ for integers $|\ell| \leq m$ and where $\eta_\ell$ is noise. We will limit our discussion (for now) to the idealized case to explain the method, and later we will see that it is the condition number of the Vandermonde matrix $V_m^k$ (defined below) that controls whether or not this method is noise tolerant. Let $\alpha_j = e^{- 2 \pi i f_j }$, then:
$$V_m^k(\alpha_1, \alpha_2, ..., \alpha_k) = \left [ \begin{array}{cccc} 
									1 & 1 & \dots & 1 \\
									\alpha_1 & \alpha_2 & \dots & \alpha_k \\
									\vdots & \vdots & \dots & \vdots \\
									\alpha_1^{m-1} & \alpha_2^{m-1} & \dots & \alpha_k^{m-1} 
									\end{array} \right ]$$
Moreover let $D_u = \mbox{diag}(\{u_j\})$ and let $D_\alpha = \mbox{diag}(\{\alpha_j\})$. Recall that in a generalized eigenvalue problem we are given a pair of matrices $A$ and $B$ and the goal is to find the generalized eigenvalues $\lambda$ for which there is a vector $x$ that satisfies $Ax = \lambda B x$. In particular, the eigenvalues of $A$ are the solution to the generalized eigenvalue problem where $B = I$. 

Then the following well-known observations are the key to the matrix pencil method:

\begin{observation}
The non-zero generalized eigenvalues of $V D_u V^H$ and $V D_u D_\alpha V^H$ are exactly $\{\alpha_j\}$.
\end{observation}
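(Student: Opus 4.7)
The plan is to exploit the factorizations $A := V D_u D_\alpha V^H$ and $B := V D_u V^H$ to reduce the generalized eigenvalue problem on $\mathbb{C}^m$ to a trivial eigenvalue problem for the diagonal matrix $D_\alpha$ on $\mathbb{C}^k$. The key structural fact I would lean on is that, because the $\alpha_j$ are distinct points on the unit circle and $m \geq k$, the Vandermonde matrix $V \in \mathbb{C}^{m \times k}$ has full column rank; equivalently $V^H \colon \mathbb{C}^m \to \mathbb{C}^k$ is surjective and $V$ as a map $\mathbb{C}^k \to \mathbb{C}^m$ has trivial kernel. I will also use that each $u_j \neq 0$, so $D_u$ is invertible.

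First I would verify that every $\alpha_j$ is a generalized eigenvalue of the pencil $(A, B)$. By surjectivity of $V^H$, pick $x_j \in \mathbb{C}^m$ with $V^H x_j = e_j$. Then $B x_j = V D_u e_j = u_j v_j$ (where $v_j$ denotes the $j$-th column of $V$) and $A x_j = V D_u D_\alpha e_j = u_j \alpha_j v_j = \alpha_j B x_j$. Since $u_j v_j \neq 0$, this is a bona fide generalized eigenvector for the eigenvalue $\alpha_j$.

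For the converse, I would take any pair $(\lambda, x)$ with $x \neq 0$, $\lambda \neq 0$ satisfying $A x = \lambda B x$ and $B x \neq 0$ (the last condition excluding the degenerate situation where both sides vanish identically). Setting $y := V^H x$, the hypothesis $Bx \neq 0$ forces $y \neq 0$, since otherwise $B x = V D_u \cdot 0 = 0$. The pencil equation reads $V D_u D_\alpha y = \lambda V D_u y$, i.e.\ $V D_u (D_\alpha - \lambda I) y = 0$. Because $V$ has trivial kernel and $D_u$ is invertible, this collapses to $(D_\alpha - \lambda I) y = 0$, so $(\alpha_j - \lambda) y_j = 0$ for every $j$. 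Picking any index with $y_j \neq 0$ yields $\lambda = \alpha_j$, and so $\lambda \in \{\alpha_1, \dots, \alpha_k\}$.

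I do not expect a serious obstacle here; the main thing to get right is the bookkeeping around what counts as a generalized eigenvalue. Both $A$ and $B$ vanish on the $(m-k)$-dimensional subspace $\ker V^H$, on which $Ax = \lambda Bx$ reduces to $0 = 0$ for every $\lambda$ and contributes no genuine finite eigenvalue. Once one restricts to eigenvectors $x$ with $Bx \neq 0$ (the nondegenerate part of the pencil), the argument above gives a clean bijection between the nonzero finite generalized eigenvalues and the set $\{\alpha_j\}_{j=1}^k$, as claimed.
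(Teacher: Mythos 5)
Your proof is correct, and the reduction via $y = V^H x$ to the diagonal problem $(D_\alpha - \lambda I)y = 0$, using that $V$ has full column rank and $D_u$ is invertible, is the standard argument behind the matrix pencil method. The paper states this observation as ``well-known'' without giving a proof, so there is nothing to contrast; your explicit handling of the degenerate part of the pencil supported on $\ker V^H$ (where $Ax = \lambda Bx$ holds vacuously for every $\lambda$) makes precise the ``non-zero'' qualifier that the paper leaves implicit.
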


\begin{observation}
The entry in row $i$, column $j$ of $V D_u V^H$  is $v_{i -j}$ and the corresponding entry in $V D_u D_\alpha V^H$ is $v_{i - j + 1}$
\end{observation}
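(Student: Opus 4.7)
The plan is simply to expand both matrix products entrywise and recognize each resulting sum as a measurement value $v_n$; there is essentially no obstacle beyond bookkeeping, since the statement is a definition-unpacking once one exploits that each $\alpha_\ell$ lies on the unit circle.

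For the first identity, I would start by recording the entries of the factors: $V_{i\ell} = \alpha_\ell^{\,i-1}$ (by the definition of $V_m^k$), the conjugate transpose has $(V^H)_{\ell j} = \overline{\alpha_\ell^{\,j-1}}$, and $D_u$ is diagonal with $(D_u)_{\ell\ell} = u_\ell$. Multiplying out yields
$$ (V D_u V^H)_{ij} \;=\; \sum_{\ell=1}^k \alpha_\ell^{\,i-1}\, u_\ell\, \overline{\alpha_\ell^{\,j-1}}. $$
The one nontrivial step is to use $|\alpha_\ell|=1$, which gives $\overline{\alpha_\ell} = \alpha_\ell^{-1}$ and collapses the exponents to
$$ (V D_u V^H)_{ij} \;=\; \sum_{\ell=1}^k u_\ell\, \alpha_\ell^{\,(i-1)-(j-1)} \;=\; \sum_{\ell=1}^k u_\ell\, \alpha_\ell^{\,i-j}. $$
Comparing with the noise-free form of the measurements, $v_n = \sum_\ell u_\ell e^{i 2\pi f_\ell n}$ (read under the paper's sign convention relating $\alpha_\ell$ and $f_\ell$), this sum is exactly the noise-free part of $v_{i-j}$. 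This is where I would also note, in a single sentence, that the observation is meant to apply in the idealized (noise-free) case; including noise turns the measurements into $v_n$ as displayed and the entries of $V D_u V^H$ then differ from $v_{i-j}$ only by the noise term, which is the standard interpretation used in the matrix pencil method below.

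For the second identity, inserting $D_\alpha$ between $D_u$ and $V^H$ contributes an additional factor of $\alpha_\ell$ inside each summand, so the same computation gives
$$ (V D_u D_\alpha V^H)_{ij} \;=\; \sum_{\ell=1}^k u_\ell\, \alpha_\ell^{\,i-j+1}, $$
which matches the noise-free part of $v_{i-j+1}$. Together these two formulas expose the Toeplitz structure that the matrix pencil method relies on: the two matrices are built from the measurement sequence $\{v_n\}$, and they are index-shifts of one another. Combined with Observation~1, this identifies the generalized eigenvalues of the observable pencil with the unknown $\alpha_j$'s, which is the engine of the entire method.
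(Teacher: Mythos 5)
Your proposal is correct: the paper states this observation without proof (as a well-known fact), and the direct entrywise expansion you give, using $\overline{\alpha_\ell}=\alpha_\ell^{-1}$ on the unit circle, is exactly the intended justification. Your remarks about the noise-free reading of $v_{i-j}$ and the paper's (slightly inconsistent) sign convention for $\alpha_j$ versus $f_j$ are appropriate and do not affect the validity of the argument.
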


\noindent Hence the entries in these matrices are precisely the low-frequency measurements we have access to. Now it is straightforward to find the unknown frequencies $\{f_j\}$ by setting up the above generalized eigenvalue problem and solving it. Moreover we can find the coefficients too by solving the following linear system: $$v = V_n^k(\alpha_1, \alpha_2, ..., \alpha_k) u$$ where $v$ and $u$ are the vectors of $\{v_\ell\}$'s and $\{u_j\}$'s respectively. 

This approach succeeds provided that $V_m^k(\alpha_1, \alpha_2, ..., \alpha_k)$ has full column rank, which is true if the $f_j$'s are distinct and it has at least as many rows as columns. Thus as long as the cutoff frequency satisfies $m \geq k +1 $ we can solve the super-resolution problem exactly in the noise-free setting. This algorithmic guarantee has been rediscovered numerous times. However this approach can sometimes be extremely intolerant to noise. What governs the noise-tolerance of the above generalized eigenvalue problem (and consequently the overall method) is the condition number of $V_m^k(\alpha_1, \alpha_2, ..., \alpha_k) $. There are by now many known perturbation bounds for generalized eigenvalue problems (see \cite{SS}), however we must take extra care in that the rectangular Vandermonde matrix will have full column rank but will not be invertible in our setting. We defer proving perturbation bounds for the above problem until Section~\ref{sec:genep}, but let us for now focus on understanding when the Vandermonde matrix is well-conditioned. 

\subsection{The Beurling-Selberg Majorant}\label{sec:beur}

The Beurling-Selberg majorant $B(t)$ plays a key role in analytic number theory, where it has been used to establish sharp forms of the large sieve inequality and also important extensions to the classic Hilbert inequality. In particular:

\begin{definition}
$B(t) = \Big ( \frac{\sin \pi t}{\pi} \Big )^2 \Big ( \sum_{j=0}^\infty (t - j)^{-2} - \sum_{j = -\infty}^{-1} (t - j)^{-2}  + 2 t^{-1} \Big )$
\end{definition}

We will describe its key properties next but for now we remark that $B(t)$ is an entire function of exponential type $2 \pi$ \-- i.e. its Fourier transform $\widehat{B}$ is supported on $[-1, 1]$. Indeed, the definition above looks quite strange at first glance but any function of exponential type $2\pi$ satisfies a certain interpolation formula based on its values at the integers (see e.g. \cite{T} or \cite{Z}) and $B(t)$ comes from plugging in the sign function instead (which is not of exponential type $2 \pi$). 

The key properties of $B(t)$ are that it majorizes the sign function $\mbox{sgn}$ \---- i.e. $\mbox{sgn}(t) \leq B(t)$ \---- and that:
$$\int_{-\infty}^{\infty} B(t) - \mbox{sgn}(t) \mbox{ } dt = 1$$
\noindent In fact, $B(t)$ is the extremal function that satisfies the above properties. Hence any function $F(t)$ of exponential type $2 \pi$ that majorizes the sign function satisfies $$\int_{-\infty}^{\infty}F(t) - \mbox{sgn}(t) \mbox{ } dt \geq 1$$ and equality holds if and only if $F(t) = B(t)$. There is a similar construction that minorizes the sign function, and is also extremal. However these functions are usually used to majorize and minorize a given interval, and we will instead state the theorems we need restricted to that setting.

Let $E = [-n, +n]$ and let $I_E(t)$ be the indicator function for $E$. Then

\begin{theorem}
There are entire functions $C_E(t)$ and $c_E(t)$ whose Fourier transform is supported on $[-\Delta, \Delta]$ and that satisfy $c_E(t) \leq I_E(t) \leq C_E(t)$ and $$\int_{-\infty}^{\infty}C_E(t) - I_E(t) \mbox{ } dt = \int_{-\infty}^{\infty} I_E(t) - c_E(t) \mbox{ } dt = 1/\Delta$$
\end{theorem}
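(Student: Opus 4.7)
The plan is to reduce the construction of $C_E$ and $c_E$ to the Beurling--Selberg majorant and minorant for the sign function, via the identity
\[
I_E(t) \;=\; \tfrac{1}{2}\bigl(\mathrm{sgn}(t+n) - \mathrm{sgn}(t-n)\bigr),
\]
which holds for all $t\notin\{\pm n\}$. Since a pointwise inequality can fail on a set of measure zero without affecting the integrals, this decomposition is all we need. The idea is then to majorize the first sign and minorize the second (to respect the minus sign), and vice versa for the minorant.

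First I would record the scaled versions of the Beurling--Selberg extremals. Starting from $B(t)$ from Section~\ref{sec:beur}, which satisfies $\mathrm{sgn}(t) \leq B(t)$, has $\widehat B$ supported in $[-1,1]$, and $\int (B-\mathrm{sgn})\,dt = 1$, define the minorant $b(t) = -B(-t)$, which satisfies $b(t)\leq\mathrm{sgn}(t)$ with the same integral identity and Fourier support. Now rescale: $B_\Delta(t) := B(\Delta t)$ and $b_\Delta(t) := b(\Delta t)$. Both are entire; under the scaling $t\mapsto\Delta t$, the Fourier support dilates to $[-\Delta,\Delta]$; and a change of variables gives
\[
\int_{-\infty}^{\infty} \bigl(B_\Delta(t) - \mathrm{sgn}(t)\bigr)\,dt \;=\; \int_{-\infty}^{\infty} \bigl(\mathrm{sgn}(t) - b_\Delta(t)\bigr)\,dt \;=\; \frac{1}{\Delta}.
\]

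Next I would define
\[
C_E(t) \;:=\; \tfrac12\bigl(B_\Delta(t+n) - b_\Delta(t-n)\bigr),\qquad c_E(t) \;:=\; \tfrac12\bigl(b_\Delta(t+n) - B_\Delta(t-n)\bigr).
\]
Both are entire. The Fourier transform of each translate has the same support as $\widehat{B_\Delta}$ or $\widehat{b_\Delta}$ (translation only multiplies by a phase), so $\widehat{C_E}$ and $\widehat{c_E}$ are supported in $[-\Delta,\Delta]$. For the pointwise bounds, $B_\Delta(t+n)\geq\mathrm{sgn}(t+n)$ and $-b_\Delta(t-n)\geq-\mathrm{sgn}(t-n)$ combine to give $C_E(t)\geq I_E(t)$, and symmetrically $c_E(t)\leq I_E(t)$. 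Finally, the integral identities above yield
\[
\int_{-\infty}^{\infty} \bigl(C_E(t) - I_E(t)\bigr)\,dt \;=\; \tfrac12\!\left(\tfrac{1}{\Delta} + \tfrac{1}{\Delta}\right) \;=\; \tfrac{1}{\Delta},
\]
and likewise for $I_E - c_E$.

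There is essentially no obstacle once the decomposition $I_E = \frac12(\mathrm{sgn}(\cdot+n)-\mathrm{sgn}(\cdot-n))$ is spotted; the only small points of care are (i) choosing the correct pairing of majorant with minorant so that the signs line up with the $\pm$ in the decomposition, and (ii) using the right Fourier convention so that the time dilation by $\Delta$ produces frequency support $[-\Delta,\Delta]$ rather than $[-1/\Delta,1/\Delta]$. Both are routine, and the deeper fact that $B(t)$ itself exists with the stated extremal properties is exactly what is quoted from \cite{Se} in Section~\ref{sec:beur}.
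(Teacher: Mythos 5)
Your proof is correct, and it is the standard reduction from the interval case to the Beurling--Selberg majorant/minorant for $\mathrm{sgn}$, as found in the references the paper cites (Selberg~\cite{Se}, Vaaler~\cite{V}, Montgomery--Vaughan~\cite{MV}). The paper itself does not prove this theorem; it simply quotes it after introducing $B(t)$, so there is no in-paper argument to compare against.

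One small caveat worth fixing: you dismiss the endpoints $t = \pm n$ as a measure-zero exception, but the theorem asserts the pointwise inequalities $c_E(t) \leq I_E(t) \leq C_E(t)$ for \emph{all} $t$, and the decomposition $I_E = \tfrac12(\mathrm{sgn}(\cdot+n)-\mathrm{sgn}(\cdot-n))$ genuinely fails at $\pm n$ (it gives $1/2$, not $1$). The inequalities nevertheless do hold at $\pm n$, but this needs a short argument rather than being dismissed: either observe that $C_E$ and $c_E$ are continuous and take one-sided limits of the inequalities from the interior/exterior of $E$, or check directly that $B(0)=1$ (so $b_\Delta(0)=-1$), which gives $C_E(\pm n) = \tfrac12\bigl(B_\Delta(2n)+1\bigr) \geq 1 = I_E(\pm n)$ and $c_E(\pm n) = \tfrac12\bigl(-1 - B_\Delta(2n)\bigr) \leq 0 \leq I_E(\pm n)$. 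With that one line added, the proof is complete.
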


\noindent The functions $C_E(t)$ and $c_E(t)$ are called majorants and minorants respectively. In \cite{MV}, Montgomery and Vaughan used these properties of $C_E(t)$ and $c_E(t)$ to give a strong generalization of Hilbert's inequality. Here, we will modify their approach and prove upper bounds on the condition number of the Vandermonde matrix. 

\subsection{Upper Bounding the Condition Number}

Here we will adapt the approach of Montgomery and Vaughan \cite{MV} to give an upper bound on the condition number of $V_m^k$. Recall that $\alpha_j = e^{i 2 \pi  f_j }$ and $d_w$ is the wrap-around distance on the unit interval. Then suppose each $f_j \in [0, 1)$ and the minimum separation is strictly larger than $\Delta$ \---- i.e. $d_w(f_j, f_{j'}) > \Delta$ for $j \neq j'$. We prove that  $V_m^k(\alpha_1, \alpha_2, ..., \alpha_k)$ is well-conditioned provided that $m > \frac{1}{\Delta} + 1$. Moreover we will prove in Section~\ref{sec:need} that if $m \leq \frac{1 - \epsilon}{\Delta}$ then  $V_m^k$ can have condition number that is exponentially large in $k$, so there is a sharp transition. 

Let $\widehat{C}_E$ denote the Fourier transform  of $C_E$ and let $h(\ell) = \sum_{\ell = -\infty}^\infty \delta(\ell) = \sum_{t = -\infty}^{\infty} e^{i 2 \pi t \ell} $ denote the Dirac comb, where we have used the fact that the Dirac comb is its own Fourier transform. Recall that $E = [-n, +n]$, $I_E(t)$ is the indicator function for $E$ and $v_{\ell} = \sum_{j = 1}^k u_j e^{i 2 \pi  f_j \ell}$. Then we can write
\begin{eqnarray*}
\sum_{\ell = -n}^n \|v_{\ell}\|^2 &=& \int_{-\infty}^\infty h(\ell) I_E(\ell) \|v_{\ell}\|^2 \mbox{ } d\ell \leq \int_{-\infty}^\infty h(\ell) C_E(\ell) \|v_{\ell}\|^2 \mbox{ } d\ell \\
&=& \sum_{j = 1}^k \sum_{j' = 1}^k u_j \bar{u}_{j'} \int_{-\infty}^\infty h(\ell) C_E(\ell) e^{i 2 \pi (f_j - f_{j'}) \ell}\mbox{ } d\ell \\
&=& \sum_{j = 1}^k \sum_{j' = 1}^k \sum_{t = -\infty}^{\infty}  u_j \bar{u}_{j'} \int_{-\infty}^\infty e^{i 2 \pi t \ell} C_E(\ell) e^{i 2 \pi (f_j - f_{j'}) \ell}\mbox{ } d\ell \\
&=& \sum_{j = 1}^k \sum_{j' = 1}^k \sum_{t = -\infty}^{\infty} u_j \bar{u}_{j'}  \widehat{C}_E(f_j - f_{j'} + t) =  \widehat{C}_E(0) \sum_{j = 1}^k |u_j|^2
\end{eqnarray*}
\noindent where the last equality follows because $|f_j - f_{j'} + t |> \Delta$ for any integer $t$, and because $\widehat{C}_E$ is supported on $[-\Delta, \Delta]$. Moreover $$\widehat{C}_E(0) = \int_{-\infty}^{\infty}C_E(\ell) \mbox{ } d\ell = |E| + 1/\Delta = 2n + 1/\Delta$$ We can instead use the minorant $c_E(\ell)$ and combined with the above inequality we get $$(2n -1/\Delta) \sum_{j = 1}^k |u_j|^2 \leq \sum_{\ell = -n}^n \|v_{\ell}\|^2 \leq (2n + 1/\Delta) \sum_{j = 1}^k |u_j|^2$$

Now we can make the substitution $b_j = u_j e^{- i 2 \pi  f_j n}$ in which case $$\|V_{2n + 1}^k(\alpha_1, \alpha_2, ..., \alpha_k) b\|^2 = \sum_{\ell = -n}^n \|v_{\ell}\|^2 = (2n \pm 1/\Delta) \sum_{j = 1}^k \|b_j\|^2$$ Recall that the condition number $\kappa(V) = \frac{\max_a \|V a\|}{\min_{a'} \|V a'\|}$ and hence we obtain the following theorem by setting $m = 2n + 1$:

\begin{theorem}
The condition number $\kappa$ of $V_m^k(\alpha_1, \alpha_2, ..., \alpha_k)$ satisfies $\kappa^2 \leq \frac{m + 1/\Delta - 1}{m - 1/\Delta -1}$ provided that $m > \frac{1}{\Delta} + 1$. 
\end{theorem}

\subsection{The Generalized Eigenvalue Problem}\label{sec:genep}

There are a number of well-known results that bound the noise-tolerance of a generalized eigenvalue problem (see \cite{SS} and references therein). However often these results assume that the matrices $A$ and $B$ are both full rank. We will be able to reduce to this case and appeal to their perturbation bounds as a black-box. For convenience, we will adopt the same notation as in \cite{SS} and let us first focus on settings where $A$ and $B$ are square matrices of order $k$. We are interested in the solutions to $\mbox{det}(\alpha A - \beta B) = 0$. 

Additionally let $\lambda = \alpha/\beta$. Then the pair $(A, B)$ is called {\em regular} if there is any choice of $\alpha$ and $\beta$ for which $\mbox{det}(\alpha A - \beta B) \neq 0$. Of course, if either $A$ or $B$ is full rank then the pair is regular. The bounds in \cite{SS} are stated in terms of the chordal metric, which will be convenient for our purposes.

\begin{definition}
The chordal metric for $\lambda, \mu \in \mathbb{C}$ is defined as $$\chi(\lambda, \mu) = \frac{ |\lambda - \mu|}{\sqrt{1 + |\lambda|^2} \sqrt{1 + |\mu|^2}}$$
\end{definition}

\noindent Equivalently if we take $s(\lambda)$ and $s(\mu)$ to be the points on $\mathbb{S}^2$ so that $\lambda$ and $\mu$ are their stereographic projections, then $\chi(\lambda, \mu)  = \frac{1}{2} \|s(\lambda) - s(\mu)\|_2$. This metric is well-suited for our applications since our eigenvalues $\lambda$ will anyways be on the boundary of the complex disk. 

Throughout this section let $\widehat{A} = A + E$ and let $\widehat{B} = B + F$. Let $\widehat{\lambda}$ be the generalized eigenvalues for the pair $\widehat{A}, \widehat{B}$. We will work with the following measure:

\begin{definition}
The matching distance with respect to $\chi$ is defined as $$\mbox{md}_{\chi}[(A, B), (\widehat{A}, \widehat{B})] = \min_\pi \max_{i} \chi(\lambda_i, \widehat{\lambda}_{\pi(i)})$$ where $\pi$ is a permutation on $\{1, 2, ..., k\}$. 
\end{definition}

\noindent Note that we can define the matching distance analogously using another metric. In fact, we will state the guarantees of the matrix pencil method based on the matching distance of the output $\widehat{f}_j$ to the true $f_j$'s according to the wrap-around metric. 

We will make use of the following theorem, which combines Theorem $2.4$ and Corollary $2.5$ from \cite{SS}:

\begin{theorem}\label{thm:gersh2} \cite{SS}
Let $(A, B)$ and $(\widehat{A}, \widehat{B})$ be regular pairs and further suppose that for some nonsingular $X$ we have $(X A X^H, X B X^H) = (I, D)$ where $D$ is diagonal. Also let $e_i$ be the $i$th row of $X(A - \widehat{A})X^H$ and  let $f_i$ be the $i$th row of $X (B-\widehat{B}) X^H$ and set $\rho = \max_i \{\| e_i \|_1 + \| f_i \|_1\}$. If the following regions 
$$\mathcal{G}_i = \Big\{ \mu \Big| \chi(D_{ii}, \mu) \leq \rho\Big\}$$
are disjoint then the matching distance of the generalized eigenvalues of $(\widehat{A}, \widehat{B})$ to $\{D_{ii}\}$ is at most $\rho$ with respect to the chordal metric. 
\end{theorem}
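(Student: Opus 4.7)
The plan is to reduce to the case $(A,B) = (I,D)$ via the simultaneous diagonalization $X$, establish a Gershgorin-type inclusion that each perturbed generalized eigenvalue lies in some $\mathcal{G}_i$, and then use a continuity/homotopy argument together with the disjointness of the $\mathcal{G}_i$'s to upgrade inclusion to a perfect matching.

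First I would use the fact that generalized eigenvalues are congruence-invariant: since $\det(\alpha A - \beta B) = 0$ iff $\det(\alpha XAX^H - \beta XBX^H) = 0$, the generalized eigenvalues of $(A,B)$ equal those of $(I,D)$, namely the entries $D_{ii}$. Likewise, the generalized eigenvalues of $(\widehat{A}, \widehat{B})$ coincide with those of $(I - E, D - F)$, where the $i$th row of $E$ is $e_i$ and the $i$th row of $F$ is $f_i$ by definition. So we may assume throughout that we are perturbing the canonical diagonal pair, whose generalized eigenvalues we already know exactly.

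Next I would carry out the Gershgorin-style row argument. Fix a generalized eigenvalue $\widehat{\lambda} = \widehat{\alpha}/\widehat{\beta}$ of $(I - E, D - F)$ with eigenvector $v$ normalized so that $|v_i| = \|v\|_\infty = 1$ for some index $i$. Expanding the $i$th row of $\widehat{\alpha}(I - E)v = \widehat{\beta}(D - F)v$ gives
\[
\widehat{\alpha} - \widehat{\beta} D_{ii} \;=\; \widehat{\alpha}\,(e_i \cdot v) - \widehat{\beta}\,(f_i \cdot v).
\]
Using $|e_i \cdot v| \leq \|e_i\|_1$, $|f_i \cdot v| \leq \|f_i\|_1$, together with
\[
|\widehat{\alpha}|\|e_i\|_1 + |\widehat{\beta}|\|f_i\|_1 \;\leq\; (\|e_i\|_1 + \|f_i\|_1)\max(|\widehat{\alpha}|,|\widehat{\beta}|) \;\leq\; \rho \sqrt{|\widehat{\alpha}|^2 + |\widehat{\beta}|^2},
\]
and then dividing by $\sqrt{(1+|D_{ii}|^2)(|\widehat{\alpha}|^2 + |\widehat{\beta}|^2)}$, we obtain
\[
\chi(D_{ii}, \widehat{\lambda}) \;\leq\; \frac{\rho}{\sqrt{1+|D_{ii}|^2}} \;\leq\; \rho,
\]
so $\widehat{\lambda} \in \mathcal{G}_i$. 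Thus every perturbed generalized eigenvalue lies in $\bigcup_i \mathcal{G}_i$.

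Finally, to promote this inclusion to a bijective matching when the $\mathcal{G}_i$ are pairwise disjoint, I would introduce the homotopy $(A_t, B_t) = (I - tE, D - tF)$ for $t \in [0,1]$. For each $t$ the row perturbations have sum-of-$\ell_1$-norms bounded by $t\rho \leq \rho$, so the same row argument shows that every generalized eigenvalue of $(A_t, B_t)$ lies in $\bigcup_i \mathcal{G}_i$. At $t = 0$ the $k$ eigenvalues are precisely $D_{11},\dots,D_{kk}$, exactly one per $\mathcal{G}_i$; since the $\mathcal{G}_i$ are closed disjoint chordal disks and generalized eigenvalues depend continuously on the pencil in the chordal metric, the eigenvalues trace $k$ continuous paths on the Riemann sphere that cannot cross between distinct $\mathcal{G}_i$'s. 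This yields a permutation $\pi$ with $\chi(D_{ii}, \widehat{\lambda}_{\pi(i)}) \leq \rho$, as desired. The main obstacle is this last continuity/counting step: one must verify that $(A_t, B_t)$ remains regular throughout the homotopy so the spectrum does not degenerate (which follows because $\bigcup_i \mathcal{G}_i$ is a proper subset of the Riemann sphere), and one must argue that no continuous eigenvalue path can hop between disjoint closed disks — a step that is clean only once one works on $\mathbb{S}^2$ via the chordal metric, since a perturbed eigenvalue may pass through $\infty$ and would otherwise appear discontinuous.
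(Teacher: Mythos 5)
The paper does not prove this theorem; it quotes it directly from Stewart and Sun's \emph{Matrix Perturbation Theory} (combining their Theorem~2.4 and Corollary~2.5), so there is no in-paper argument to compare against. Your proof is correct and is in fact the standard Gershgorin-plus-continuity proof used in that reference: the congruence reduction to $(I,D)$ is valid, the $i$th-row bound (taken at a coordinate where $\lvert v_i\rvert=\lVert v\rVert_\infty$) correctly yields $\chi(D_{ii},\widehat{\lambda}) \leq \rho/\sqrt{1+\lvert D_{ii}\rvert^2} \leq \rho$, and the homotopy $(I-tE, D-tF)$ together with continuity of generalized eigenvalues on the Riemann sphere and the Gershgorin inclusion at every $t$ gives the bijective matching. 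Two small points worth tightening: the displayed identity should read $(\widehat{\alpha}-\widehat{\beta}D_{ii})v_i = \widehat{\alpha}(e_i\cdot v)-\widehat{\beta}(f_i\cdot v)$ (the factor $v_i$ was dropped, harmlessly since $\lvert v_i\rvert=1$), and the disjointness hypothesis tacitly forces the $D_{ii}$ to be distinct so that at $t=0$ each $\mathcal{G}_i$ contains exactly one eigenvalue with multiplicity one, which is what the counting step needs.
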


\subsection{A Modified Matrix Pencil Method}

Now we are ready to prove error bounds for the matrix pencil method. Recall that we are interested in solving the generalized eigenvalue problem $(V D_u V^H, V D_u D_\alpha V^H)$ where $V$ is short-hand for $V_m^k$, $D_u = \mbox{diag}(u_j)$ and $D_\alpha = \mbox{diag}(\alpha_j)$. Instead we are given $VD_uV^H + E$ and $VD_u D_\alpha V^H + F$. We will first reduce to a generalized eigenvalue problem on order $k$ matrices so that the matrix pair will be regular. Ideally, we would project onto the column span of $V$ to preserve the fact that the $\alpha_j$'s are the generalized eigenvalues. However we do not have access to this space, and instead we will project onto the top $k$ singular vectors of $VD_uV^H + E$.  Let $\widehat{U}$ be an orthonormal basis for this space. 

It follows from Wedin's theorem (see e.g. \cite{HJ}) that the column space of $V$ and the span of $\widehat{U}$ have a small principal angle. In fact, an even stronger statement is true: we can align the basis of the column span of $V$ to $\widehat{U}$. Throughout this section, let $u_{min} = \min_j |u_j|$ and $u_{max} = \max_j |u_j|$. 

\begin{lemma}
There is an orthonormal basis $U$ for the column span of $V$ that satisfies $\|U - \widehat{U}\|_2 \leq  \frac{2 \|E\|_2}{u_{min}} $. 
\end{lemma}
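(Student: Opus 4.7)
The plan is to identify $\mathrm{col}(V)$ with the top-$k$ left singular subspace of $M := V D_u V^H$: since $V$ has full column rank $k$ and $D_u$ is invertible, $M$ has rank exactly $k$ and $\mathrm{col}(M) = \mathrm{col}(V)$. Writing a thin SVD $M = U_M \Sigma_M W_M^H$, the columns of $U_M \in \mathbb{C}^{m \times k}$ form an orthonormal basis for $\mathrm{col}(V)$. Since $\widehat{U}$ collects the top-$k$ left singular vectors of $M + E$, the task reduces to (i) bounding the principal angles between $\mathrm{span}(U_M)$ and $\mathrm{span}(\widehat{U})$, and then (ii) absorbing the rotational ambiguity in the choice of basis for $\mathrm{col}(V)$ by picking the right unitary rotation.

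For step (i), I would invoke Wedin's $\sin\Theta$ theorem for singular subspaces. The relevant gap is $\sigma_k(M) - \sigma_{k+1}(M+E)$; by Weyl's inequality and $\sigma_{k+1}(M) = 0$, this is at least $\sigma_k(M) - \|E\|_2$. To lower-bound the smallest nonzero singular value of $M$, I would use the thin SVD $V = U_V \Sigma_V W_V^H$ to write $M = U_V (\Sigma_V W_V^H D_u W_V \Sigma_V) U_V^H$; the singular values of $M$ then coincide with those of the $k \times k$ matrix in the middle, whose smallest is at least $\sigma_k(V)^2 u_{\min}$ by submultiplicativity (using that $W_V$ is unitary and $D_u$ has smallest singular value $u_{\min}$). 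Wedin accordingly gives $\|\sin\Theta_L(U_M, \widehat{U})\|_2 \leq \|E\|_2/(\sigma_k(V)^2 u_{\min} - \|E\|_2)$. In the regime of interest, the earlier condition-number theorem guarantees $\sigma_k(V)^2 \geq m - 1/\Delta - 1 \geq 1$, and under the mild assumption $\|E\|_2 \leq \sigma_k(V)^2 u_{\min}/2$ the right-hand side simplifies to a constant multiple of $\|E\|_2/u_{\min}$, matching the stated bound.

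For step (ii), I would appeal to the standard consequence of the CS decomposition that for any two orthonormal $k$-frames in $\mathbb{C}^{m \times k}$ with left principal angles $\Theta$, there is a unitary $Q \in \mathbb{C}^{k \times k}$ with $\|U_M Q - \widehat{U}\|_2 \leq \sqrt{2}\,\|\sin\Theta_L\|_2$. A concrete choice is to take $Q$ as the unitary factor in the polar decomposition of $U_M^H \widehat{U}$, i.e.\ the orthogonal Procrustes optimum. Setting $U := U_M Q$ produces an orthonormal basis for $\mathrm{col}(V)$, and combining with step (i) delivers the inequality in the lemma.

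The main obstacle is that $M$ is not Hermitian when the coefficients $u_j$ are complex, so one cannot apply the cleanest Davis--Kahan theorem for eigenvectors and must instead use the singular-subspace form of Wedin, being careful to identify the correct ``gap'' between the top-$k$ singular values of $M$ and the remaining singular values of $M+E$. A second, conceptual point is that the alignment in step (ii) is essential: without the unitary $Q$, the bases $U_M$ and $\widehat{U}$ can differ by an arbitrary rotation even when $E = 0$, so the lemma can only assert the \emph{existence} of a suitable basis, not its uniqueness.
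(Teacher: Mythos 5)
Your proposal is correct and follows essentially the same route as the paper: Wedin's $\sin\Theta$ theorem applied to $M=VD_uV^H$ versus $M+E$, followed by a Procrustes alignment (the polar factor of $U_M^H\widehat U$) to convert the principal-angle bound into a bound on $\|U-\widehat U\|_2$. The only substantive differences are cosmetic and in your favor: you make explicit the gap $\sigma_k(M)\geq\sigma_k(V)^2 u_{\min}$ and the smallness condition on $\|E\|_2$ (the paper absorbs these silently, and indeed the downstream Theorem~\ref{thm:withnoise} carries the hypothesis $\|E\|_2+\|F\|_2<\sigma_{\min}^2 u_{\min}$), and your CS-decomposition bound $\|U_M Q-\widehat U\|_2\leq\sqrt{2}\,\|\sin\Theta\|_2$ is slightly tighter than the paper's elementary triangle-inequality split $\|U-\widehat U\|_2\leq\|UU^H\widehat U-\widehat U\|_2+\|U-UU^H\widehat U\|_2\leq 2\sin\Theta$ (which, note, also implicitly requires the Procrustes choice in order to assert $\|I-U^H\widehat U\|_2\leq 1-\cos\Theta$). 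Both yield the $2\|E\|_2/u_{\min}$ form under the same implicit hypotheses.
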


\begin{proof}
By Wedin's theorem and the definition of the principal angle we have that $$\sin( \Theta(U, \widehat{U})) = \|(I- U U^H) \widehat{U} \|_2 \leq \frac{\|E\|_2}{u_{min}}$$ Then using Weyl's inequality (see e.g. \cite{HJ}) we can write 
$$\|U - \widehat{U}\|_2 \leq \|U U^H \widehat{U} - \widehat{U}\|_2 + \|U - U U^H \widehat{U}\|_2 $$
The first term is just $\sin( \Theta(U, \widehat{U}))$. Also we can use the fact that $\cos( \Theta(U, \widehat{U})) = \sigma_{\min}(U^H \widehat{U})$. Then we can bound the second term as
$$ \|U - U U^H \widehat{U}\|_2  \leq \|I - U^H \widehat{U}\|_2 \leq 1 - \cos( \Theta(U, \widehat{U}))  \leq \sin( \Theta(U, \widehat{U})) \leq \frac{\|E\|_2}{u_{min}}$$
And this completes the proof. 
\end{proof}

Now let $A = U^H V D_u V^H U$, let $B = U^H V D_u D_{\alpha} V^H U$ and let $X = D_u^{-1/2} V^+ U$. It follows that $X A X^H= I$ and $X B X^H = D_{\alpha}$.  Also consider the perturbed generalized eigenvalue problem for $\widehat{A} = \widehat{U}^H (V D_u V^H + E) \widehat{U}$ and $\widehat{B} = \widehat{U}^H (V D_u D_\alpha V^H + F) \widehat{U}$. Let us set $\|U - \widehat{U}\|_2= \tau$ for notational convenience. 

Then we can write
$\|X(A - \widehat{A})X^H\|_2 \leq  \|X\|_2^2 \|A - \widehat{A}\|_2$.
And furthermore
$$ \|A - \widehat{A}\|_2 \leq \|E\|_2 + \|U^H V D_u V^H U - \widehat{U}^H V D_u V^H \widehat{U}\|_2 \leq \|E\|_2 + \Big(2 \tau \|V\|_2^2 + \tau^2 \|V\|_2^2\Big)u_{max}$$
And combining these inequalities and $\tau = \frac{2 \|E\|_2}{u_{min}} $ we have
$$\|X(A - \widehat{A})X^H\|_2 \leq \frac{\|E\|_2}{\sigma_{min}^2 u_{min}} + 4 \kappa^2 \Big( \frac{\|E\|_2}{u_{min}} + \frac{\|E\|_2^2}{u_{min}^2}\Big) \frac{u_{max}}{u_{min}} $$
where $\kappa$ and $\sigma_{\min}$ are the condition number and minimum singular value of $V$ respectively. 
An identical bound holds for $B$ and $\widehat{B}$ where we replace $E$ by $F$. We remark that a bound on the spectral norm of $X(A - \widehat{A})X^H$ is also a bound on the $\ell_2$-norm of any of its rows; and to get a bound on the $\ell_1$-norm of any its rows, we can multiply the above bound by  $k^{1/2}$. 

Hence we can combine the above bounds and Theorem~\ref{thm:gersh2} to get that the generalized eigenvalues of $(\widehat{A}, \widehat{B})$ converge to the generalized eigenvalues of $(A, B)$ at an inverse polynomial rate with the noise, with respect to the chordal metric. We remark that the generalized eigenvalues of $(A, B)$ are on the complex disk. Hence if $(\widehat{\alpha}, \widehat{\beta})$ is a generalized eigenvalue of $(\widehat{A}, \widehat{B})$, we can find the closest point on the complex disk to $\widehat{\lambda} = \widehat{\alpha}/\widehat{\beta}$ which in turn will be have arc-length at most $O(\chi(\widehat{\lambda}, \lambda))$ where $\lambda = \alpha/\beta$ and $(\alpha, \beta)$ is the generalized eigenvalue of $(A, B)$ that is matched to  $(\widehat{\alpha}, \widehat{\beta})$. 

Hence the output of {\sc ModifiedMPM} is a set of locations $\{\widehat{f}_j\}$ whose matching distance to $\{f_j\}$ with respect to $d_w$ converges at an inverse polynomial rate to zero with the magnitude of the error. All that remains is to recover the coefficients too. To this end, we write $\widehat{V}$ to denote the Vandermonde matrix with $\widehat{f}_j$'s instead of $f_j$'s. Then we are given 
$v= V u + \eta = \widehat{V} u + (V - \widehat{V})u + \eta$.
Then if we solve the linear system $v = \widehat{V} \widehat{u}$ we get $\widehat{u}$ such that 
$$\|u - \widehat{u}\|_2 \leq \frac{\|V - \widehat{V}\|_2 \|u\|_2}{\|\eta\|_2}$$
Let the matching distance between $\{f_j\}$ and $\{\widehat{f}_j\}$ be $\rho$. Then it is easy to see that each entry in $V - \widehat{V}$ is bounded by $O(\rho m)$ and hence
$$\|u - \widehat{u}\|_2 \leq O\Big( \frac{ \rho m^{3/2} k u_{max} + \|\eta\|_2}{m -1 - \frac{1}{\Delta - 2\rho}} \Big)$$
since the minimum separation of $\{\widehat{f}_j\}$ is at least $\Delta - 2\rho$. 

\begin{fragment*}[t]
\caption{
\label{alg:iterrefine}{\sc ModifiedMPM},  \textbf{Input:} $k$ and estimates $v_\ell$ for $|\ell | \leq m$ \\   \textbf{Output:} $\{(\widehat{u}_j, \widehat{f}_j)\}$  \vspace*{0.01in}
}

\begin{enumerate} \itemsep 0pt
\small 
\item Let $\widetilde{A}$ and $\widetilde{B}$ be matrices where $\widetilde{A}_{i,j} = v_{i - j}$ and $\widetilde{B}_{i,j} = v_{i - j + 1}$
\item Let $\widehat{U}$ be the top $k$ singular vectors of $\widetilde{A}$
\item Set $\widehat{A} = \widehat{U}^H \widetilde{A} \widehat{U}$ and $\widehat{B} = \widehat{U}^H \widetilde{B} \widehat{U}$
\item Solve for the generalized eigenvalues of $(\widehat{A}, \widehat{B})$
\item For each generalized eigenvalue $\widehat{\lambda} = \widehat{\alpha}/\widehat{\beta}$, let $\widehat{f}_j$ be the argument of the projection of $\widehat{\lambda}$ onto the complex disk
\item Set $\widehat{V}$ to be the Vandermonde matrix with $\{\widehat{f}_j\}$
\item Solve for $\widehat{u}$ such that $\widehat{V} \widehat{u} = v$
\item Output $\{(\widehat{u}_j, \widehat{f}_j)\}$
\end{enumerate} 

\end{fragment*}

We can now state our main theorem, which gives a quantitative rate for how quickly our estimates converge to the true values:

\begin{theorem}\label{thm:withnoise}
Let $\eta$ be the vector with entries $\{\eta_\ell\}$ and set
$$\gamma = \frac{ k \|\eta\|_2}{\sigma_{min}^2 u_{min}} + 4 \kappa^2 \Big( k \frac{\|\eta\|_2}{u_{min}} + k^{3/2} \frac{\|\eta\|_2^2}{u_{min}^2} \Big) \frac{u_{max}}{u_{min}} \mbox{ and } \zeta = O\Big( \frac{ \gamma m^{3/2} k u_{max} + \|\eta\|_2}{m -1 - \frac{1}{\Delta - 2\gamma}} \Big)$$ 
Further suppose that $\|E\|_2 + \|F\|_2 < \sigma_{\min}^2 u_{min} $ and $\gamma < \Delta/4$. 
Then the output of {\sc ModifiedMPM} satisfies
$$d_w(f_j, \widehat{f}_{\pi(j)}) \leq 2\gamma \mbox{ and } |u_j - \widehat{u}_{\pi(j)}| \leq \zeta$$
for all $j$, for some permutation $\pi$. 
\end{theorem}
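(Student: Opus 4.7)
The plan is to stitch together the three pieces already developed in this section: the condition number bound from Theorem~1.1, Wedin-type alignment of the empirical top-$k$ singular subspace with the true column span of $V$, and the Stewart–Sun perturbation bound for regular pencils (Theorem~\ref{thm:gersh2}). The conclusion follows by tracking constants through each stage and finally inverting a (nearly) Vandermonde system to recover the coefficients.

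First I would bound the perturbations of the empirical Hankel/Toeplitz matrices $\widetilde A, \widetilde B$ by $\|E\|_2, \|F\|_2 \leq O(\sqrt{k}\,\|\eta\|_2)$, using that each entry is some $\eta_\ell$. Plugging this into the alignment lemma gives $\|U - \widehat U\|_2 \leq 2\|E\|_2/u_{min}$, so $\widehat U$ and $U$ differ by a small amount in operator norm. Next I would set up the reduction to order-$k$ matrices: with $X = D_u^{-1/2} V^+ U$, the pair $(XAX^H, XBX^H)$ equals $(I, D_\alpha)$, and the calculation of $\|X(A-\widehat A)X^H\|_2$ already written in the text gives exactly the first additive piece of $\gamma$ (divided by $k$) plus the $4\kappa^2(\cdot)$ term. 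Converting spectral-norm bounds to $\ell_1$-norm bounds on rows (needed for Theorem~\ref{thm:gersh2}) costs a factor of $\sqrt{k}$, which is absorbed into the definition of $\gamma$, and the analogous bound for $(B,\widehat B)$ just substitutes $F$ for $E$.

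Second, I would apply Theorem~\ref{thm:gersh2} to conclude that the matching distance in the chordal metric is at most $\gamma$. The disjointness hypothesis is where $\gamma < \Delta/4$ enters: the true generalized eigenvalues $\alpha_j = e^{i2\pi f_j}$ lie on the unit circle with pairwise chordal separation $\Theta(\Delta)$, so chordal balls of radius $\gamma$ around them remain disjoint. To translate this back to the wrap-around metric on $\{f_j\}$, note that projecting an arbitrary $\widehat\lambda \in \C$ to the nearest point on the unit circle can at worst double the chordal distance, and on the unit circle the chordal metric is equivalent to arc-length (up to constants). This delivers $d_w(f_j, \widehat f_{\pi(j)}) \leq 2\gamma$ after absorbing constants into the definitions.

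Third, for the coefficient bound I would plug the recovered frequencies into $\widehat V$ and solve $\widehat V\widehat u = v = Vu + \eta$. Since the separation of $\{\widehat f_j\}$ is at least $\Delta - 2\gamma$, Theorem~1.1 lower bounds $\sigma_{\min}(\widehat V)$ by $\sqrt{m-1-1/(\Delta-2\gamma)}$ (up to the normalization implicit in the statement). The perturbation $V - \widehat V$ has each entry of magnitude $O(\gamma m)$, so $\|V-\widehat V\|_2 \leq O(\gamma m^{3/2}\sqrt{k})$ by Cauchy–Schwarz on the $m \times k$ matrix, and the standard perturbation bound for linear systems gives $\zeta$ as stated. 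The main technical obstacle is the second step — specifically, carefully tracking how the $X$-conjugation interacts with $\kappa$, $\sigma_{\min}$, and the $u_{max}/u_{min}$ ratio so that the row-$\ell_1$ bound lines up with the regions $\mathcal{G}_i$ — everything else is assembling already-established pieces.
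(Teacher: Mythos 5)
Your proposal reproduces the paper's proof structure essentially step by step: bound the Toeplitz noise matrices $E,F$ in terms of $\|\eta\|_2$, use Wedin's theorem to align $\widehat U$ with the column span of $V$, conjugate by $X = D_u^{-1/2}V^+U$ to reduce to the regular pencil $(I, D_\alpha)$, invoke the Stewart--Sun bound (Theorem~\ref{thm:gersh2}) with the $\sqrt{k}$ row-$\ell_1$ conversion and the $\gamma < \Delta/4$ disjointness condition, project back to the unit circle to translate chordal distance to arc length, and finally bound $\|V - \widehat V\|_2$ and invert the perturbed Vandermonde system for the coefficients. This is the paper's argument; the only slight imprecision is the claim $\|E\|_2 \lesssim \sqrt{k}\|\eta\|_2$ for the $m\times m$ Toeplitz noise matrix (a Fourier-symbol bound gives $\sqrt{m}\|\eta\|_2$), but that tracks the same implicit step in the paper's own constant bookkeeping.
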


\noindent The important point is that the estimates converge at an inverse polynomial rate to the true values as a function of $\|\eta\|_2$ provided that $m > (1+\epsilon)/\Delta$ for any $\epsilon > 0$. 




\section{The Need for Separation}\label{sec:need}

In the previous section, we gave an upper bound for the condition number of $V_m^k$ based on $m$ and $\Delta$. Here we establish that there is a sharp phase transition, and if $m = (1-\epsilon)/\Delta$ then $V_m^k$ has an exponentially large condition number. From this it follows that if $m = (1-\epsilon)/\Delta$ there is a pair of of $k$ point sources $x$ and $x'$ each with separation $\Delta$ which we would need exponentially small noise to tell apart. 

\subsection{Impossibility Results}

In order to make the connection to super-resolution it will actually be more convenient to work with the shifted Vandermonde matrix (below), but all the results in this section carry-over the other case as well.

$$V_{-n/2, n/2}^k(\alpha_1, \alpha_2, ..., \alpha_k) = \left [ \begin{array}{cccc} 
									\alpha_1^{-n/2} & \alpha_2^{-n/2} & \dots & \alpha_k^{-n/2} \\
									\alpha_1^{-n/2 +1} & \alpha_2^{-n/2 + 1} & \dots & \alpha_k^{-n/2 +1}\\
									\vdots & \vdots & \dots & \vdots \\
									\alpha_1^{n/2} & \alpha_2^{n/2} & \dots & \alpha_k^{n/2} 
									\end{array} \right ]$$
									
\noindent Throughout this section we will use slightly different notation where we will be interested in delta functions spaced at intervals of length $1/m$ and we will use $n$ to denote the number of measurements. Let $n = (1-\epsilon)m$. 

\begin{theorem}\label{thm:need}
Let $k = \Omega(\log m)$. Then the matrix $V_{-n/2, n/2}^k$ where $\alpha_j = e^{i 2 \pi j/m}$ has condition number at least $2^{\Omega(\epsilon k)}$.
\end{theorem}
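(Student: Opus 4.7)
The plan is to lower-bound the condition number by exhibiting a ``near-null'' coefficient vector. Since $(V^HV)_{j,j} = n+1$ for every $j$, taking $u=e_j$ already gives $\sigma_{\max}(V) \geq \sqrt{n+1}$, so it suffices to produce a unit vector $u \in \mathbb{C}^k$ with
\[
\|Vu\|_2^2 \;\leq\; (n+1)\cdot 2^{-\Omega(\epsilon k)},
\]
which immediately yields $\kappa(V) \geq 2^{\Omega(\epsilon k)}$.

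For the construction I would take a modulated discrete Gaussian
\[
u_j \;\propto\; (-1)^{j}\exp\!\bigl(-(j-j_0)^2/(2\sigma^2)\bigr),
\]
where $j_0$ is the center of the $k$-window and $\sigma$ is a small constant multiple of $k$ (so that the Gaussian sits comfortably inside the window). The role of the factor $(-1)^j = e^{i\pi j}$ is to shift the effective ``carrier frequency'' of the resulting sum by $\tfrac12$, so that with $\xi := \tfrac12 + \ell/m$ one has
\[
(Vu)(\ell) \;\propto\; \sum_{j} \exp\!\bigl(-(j-j_0)^2/(2\sigma^2)\bigr)\, e^{i 2\pi j \xi}.
\]
The point of this shift is that as $\ell$ ranges over $[-n/2,n/2]$ with $n = (1-\epsilon)m$, the shifted frequency $\xi$ lies in $[\epsilon/2,\,1-\epsilon/2]$, so $\mathrm{dist}(\xi,\mathbb{Z}) \geq \epsilon/2$ uniformly in $\ell$.

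The heart of the argument is then Poisson summation: up to an exponentially small error from truncating the Gaussian to the window, the sum above equals $\sum_{r\in\mathbb{Z}} \widehat{\phi}(\xi+r)$, where $\widehat{\phi}(\eta) = \sigma\sqrt{2\pi}\,e^{-2\pi^2\sigma^2\eta^2}$ is the continuous Fourier transform of the Gaussian. Since $\mathrm{dist}(\xi,\mathbb{Z})\geq \epsilon/2$, each summand is bounded by $\sigma\sqrt{2\pi}\exp(-\pi^2\sigma^2\epsilon^2/2)$ and the sum over $r$ converges geometrically, yielding the pointwise bound $|(Vu)(\ell)|^2 = O\!\bigl(\sigma^2\exp(-\pi^2\sigma^2\epsilon^2)\bigr)$. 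Summing over the $n+1$ values of $\ell$ and dividing by $\|u\|_2^2 \approx \sigma\sqrt{\pi}$ (a standard sum-as-integral estimate for the discrete Gaussian) gives
\[
\frac{\|Vu\|_2^2}{\|u\|_2^2} \;=\; O\!\bigl((n+1)\,\sigma\,\exp(-\pi^2\sigma^2\epsilon^2)\bigr).
\]
Choosing $\sigma = \Theta(k)$ makes the exponent $\Theta(\epsilon^2 k^2)$, which is comfortably larger than $\Omega(\epsilon k)$ as soon as $\epsilon k$ exceeds an absolute constant.

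The main obstacle is coordinating three small-error quantities so that none spoils the bound: the Poisson tails $e^{-\Omega(\sigma^2)}$, the Gaussian truncation error $e^{-\Omega(k^2/\sigma^2)}$ from replacing the infinite sum by one over $j$ in the window, and the extra logarithmic factors arising from the $\sigma$ prefactor and the normalization. Taking $\sigma$ to be a small but fixed constant fraction of $k$ keeps the first two errors well below the main decay rate $e^{-\Omega(\sigma^2 \epsilon^2)}$, and the hypothesis $k = \Omega(\log m)$ is precisely the condition needed for this main decay to swamp the residual polynomial-in-$m$ factors and leave the clean bound $\kappa(V) \geq 2^{\Omega(\epsilon k)}$.
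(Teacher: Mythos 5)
Your construction is genuinely different in mechanism from the paper's proof: the paper builds the near-null vector from the Fourier coefficients of a high power $K_\ell^r$ of the Fejer kernel (with $\ell = 4/\epsilon$, $r = \Theta(\epsilon k)$) modulated to center at the half-frequency, and then reads off the exponentially small values from the polynomial decay $K_\ell^r(y) \le (4\ell^2 y^2)^{-r}$, while you use a modulated discrete Gaussian and Poisson summation. The organizing idea is the same in both, though: multiply by $(-1)^j$ so that the effective frequencies $\xi = \tfrac12 + \ell/m$ sit at distance $\ge \epsilon/2$ from $\mathbb{Z}$ for all observed $\ell$, and then exploit rapid decay of the chosen test function away from its peak.

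However, there is a concrete bug in the parameter choice. You take $\sigma = \Theta(k)$ and assert that this keeps the truncation error $e^{-\Omega(k^2/\sigma^2)}$ (from replacing the sum over the $k$-long window by the sum over all of $\mathbb{Z}$) below the main decay $e^{-\Omega(\sigma^2\epsilon^2)}$. With $\sigma = ck$ for a small constant $c$, the truncation error is $e^{-\Omega(1/c^2)}$, a fixed constant independent of $k$, while the main decay is $e^{-\Omega(c^2 k^2 \epsilon^2)}$, which is vastly smaller. So the comparison goes the wrong way: the Gaussian's mass outside the window is a constant, not exponentially small, and that tail --- not the Poisson main term --- dominates $|(Vu)(\ell)|$. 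With this choice one only gets $\|Vu\|_2^2/\|u\|_2^2 = O\bigl((n+1)\,e^{-\Omega(1/c^2)}/k\bigr)$, i.e.\ a polynomial lower bound on $\kappa$, not an exponential one.

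The fix is to balance the two exponents: take $\sigma = \Theta(\sqrt{k/\epsilon})$. Then the truncation error $e^{-\Omega(k^2/\sigma^2)} = e^{-\Omega(\epsilon k)}$ and the main decay $e^{-\Omega(\sigma^2\epsilon^2)} = e^{-\Omega(\epsilon k)}$ are of the same order, the window width $k$ still dwarfs $\sigma$ whenever $k \gg 1/\epsilon$, and the normalization $\|u\|_2^2 = \Theta(\sigma) = \Theta(\sqrt{k/\epsilon})$ is a polynomial factor absorbed into $2^{\Omega(\epsilon k)}$. It is perhaps telling that this balance point yields exactly the rate $2^{-\Theta(\epsilon k)}$, the same rate the paper extracts from the Fejer-power construction, which suggests this is the natural ceiling for any ``$k$-term, rapidly decaying'' test polynomial.
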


Throughout this section let $V$ be short-hand for $V_{-n/2, n/2}^k$. It is easy to upper bound the maximum singular value of $V$ and hence the culprit is the smallest singular value. We conclude that there is a vector $u$ with $\|u\|_2 =1$ and $\| V u\|_2 \leq 2^{-\Omega(\epsilon k)}$. Then:
$$ \sum_{k =-n/2}^{n/2} \Big ( \sum_{j \mbox{ {\small is odd}}} u_j e^{i 2 \pi  j k/m} + \sum_{j \mbox{ {\small is even}}} u_j e^{i 2 \pi  j k/m} \Big )^2 = \|V u\|_2^2 \leq 2^{-\Omega(\epsilon k)}$$
And hence we have two sets of point sources corresponding to the odd and even indices respectively, with separation $\Delta = 2/m$ but where measuring them up to the cutoff frequency $n/2 = (1-\epsilon) m/2 = (1-\epsilon)/\Delta$ would require exponentially small error to tell them apart. 

\begin{corollary}\label{corr:impossible}
There is a pair of $k$ point sources $x$ and $x'$ each with separation $\Delta$ where if $m = (1-\epsilon)/\Delta$ and the noise $\eta_\ell$ has its real and complex part chosen as Gaussian independent random variables with variance $2^{-\Omega(\epsilon k)}$, then for any integer $\ell$ with $|\ell| \leq m$, $d_{TV}(v_{\ell}, v'_{\ell}) \leq 2^{-\Omega(\epsilon k)}$ where $v_{\ell}$ and $v'_{\ell}$ are the random variables associated with measuring $x$ and $x'$ at frequency $\ell$ respectively. 
\end{corollary}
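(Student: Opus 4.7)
The plan is to exploit Theorem~\ref{thm:need} as a black box. That theorem produces a unit vector $u \in \C^k$ with $\|Vu\|_2 \leq 2^{-\Omega(\epsilon k)}$, where $V = V^k_{-n/2, n/2}$ is the shifted Vandermonde at the evenly-spaced frequencies $\alpha_j = e^{i 2\pi j/m}$, $j = 1, \ldots, k$. I would split $u$ by the parity of its index and define two candidate signals
\[
  x(t) = \sum_{j \text{ odd}} u_j\, \delta_{j/m}(t), \qquad x'(t) = -\sum_{j \text{ even}} u_j\, \delta_{j/m}(t).
\]
Consecutive odd (resp.\ even) integers differ by $2$, so each of $x, x'$ has its point sources $\Delta$-separated for $\Delta = 2/m$, and the cutoff $n/2 = (1-\epsilon)m/2 = (1-\epsilon)/\Delta$ is exactly what the corollary promises. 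The slight discrepancy in point-source counts (each of $x, x'$ has $k/2$ sources) is absorbed by applying Theorem~\ref{thm:need} with $2k$ in place of $k$ from the outset.

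The noiseless measurements of $x$ and $x'$ at frequency $\ell$ are $v_\ell = \sum_{j \text{ odd}} u_j e^{i 2\pi j\ell/m}$ and $v'_\ell = -\sum_{j \text{ even}} u_j e^{i 2\pi j\ell/m}$, so $v_\ell - v'_\ell$ equals the $\ell$-th coordinate of $Vu$. Therefore $\sum_{|\ell| \leq n/2} |v_\ell - v'_\ell|^2 = \|Vu\|_2^2 \leq 2^{-c\epsilon k}$ for some absolute $c > 0$, which in particular forces $|v_\ell - v'_\ell| \leq 2^{-c\epsilon k/2}$ at every single frequency.

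To convert this closeness of noiseless means into a TV bound I would invoke the standard KL formula for complex Gaussians. If $v_\ell + \eta_\ell$ and $v'_\ell + \eta_\ell$ denote the noisy observations with the prescribed i.i.d.\ Gaussian noise of variance $\sigma^2$ in each real coordinate, then $\mathrm{KL}(v_\ell + \eta_\ell \,\|\, v'_\ell + \eta_\ell) = |v_\ell - v'_\ell|^2/(2\sigma^2)$, and Pinsker's inequality yields $d_{TV}(v_\ell + \eta_\ell, v'_\ell + \eta_\ell) = O(|v_\ell - v'_\ell|/\sigma)$. Choosing $\sigma^2 = 2^{-c\epsilon k/2}$ then drives each per-frequency TV distance down to $2^{-\Omega(\epsilon k)}$, exactly as required. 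Tensorizing the KL identity across all $|\ell| \leq n/2$ gives $\mathrm{KL} = \|Vu\|_2^2/(2\sigma^2)$ for the joint distribution, so the same $2^{-\Omega(\epsilon k)}$ bound holds jointly and rules out distinguishing $x$ from $x'$ by any combination of the measurements.

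The argument is essentially bookkeeping once Theorem~\ref{thm:need} is in hand; the only real subtlety is choosing the noise variance small enough that each individual observation carries real information, yet still exponentially larger than the squared signal difference. Theorem~\ref{thm:need} furnishes precisely that exponential gap, so the corollary follows immediately --- the substantive work is the near-null vector construction of that theorem, which is handled upstream.
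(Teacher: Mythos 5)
Your proposal is correct and follows the paper's approach: split the near-null vector $u$ from Theorem~\ref{thm:need} into odd and even indices to produce two $2/m$-separated signals whose measurements differ by the exponentially small $(Vu)_\ell$, then pass to a TV bound. The paper leaves the Gaussian step implicit; your KL-plus-Pinsker calculation (and the cosmetic substitution of $2k$ for $k$ so each signal has $k$ spikes) is the natural way to fill that in and does not depart from the paper's argument.
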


\noindent Recall that in the noise-free case, the separation is irrelevant. It is possible to reconstruct the point sources with $m = k $. But in the noisy version of the problem, $k$ is irrelevant and the minimum number of measurements essentially only depends on the minimum separation. Let us now prove the main theorem in this section. We will make use of the Fejer kernel here (and later), although many other kernels could be used in its place.

\subsection{The Fejer Kernel I: Basic Properties}\label{sec:fejer}

Here we introduce the Fejer kernel, which we will use in the next subsection to construct a unit vector $u$ where $\|Vu\|_2$ is exponentially small. 

\begin{definition}
The Fejer kernel\footnotemark[4] is $$K_\ell(x) = \frac{1}{\ell^2} \sum_{j = -\ell}^\ell (\ell-|j|) e^{i 2 \pi j x} = \frac{1}{\ell^2} \Big ( \frac{\sin \ell \pi x}{\sin \pi x} \Big )^2$$
\end{definition}

\noindent Here we have given both the Fourier representation and the closed form for the Fejer kernel. We will use $\widehat{K}_\ell$ to denote its Fourier transform. The key is that $\widehat{K}_\ell$ is supported on $\{-\ell, -\ell + 1, ... \ell-1, \ell\}$ and yet $K_\ell(x)$ decays quickly. We will also make use of powers of the Fejer kernel which we denote by $K_\ell^r(x)$ and note that its Fourier transform is supported on $\{-r\ell, -r\ell + 1, ..., r\ell-1, r\ell\}$.

We make use of the following well-known property of the Fejer kernel:

\begin{fact}\label{fact:decay}
$K_\ell(x) \leq \frac{1}{4 \ell^2 x^2}$ for $x \in [-1/2, 1/2]$ and $K_\ell^r(x)\leq \frac{1}{4^r \ell^{2r} x^{2r}}$
\end{fact}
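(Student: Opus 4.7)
The plan is to argue directly from the closed-form expression
$K_\ell(x) = \frac{1}{\ell^2}\bigl(\frac{\sin \ell \pi x}{\sin \pi x}\bigr)^2$ given in the definition. The numerator is trivially bounded by $1$ in absolute value, so the entire task reduces to a lower bound on $|\sin \pi x|$ valid throughout the interval $[-1/2,1/2]$. Both inequalities in the fact are even in $x$, so it suffices to treat $x \in [0, 1/2]$.

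First I would establish the elementary inequality $\sin(\pi x) \geq 2x$ for $x \in [0, 1/2]$. This is a standard concavity fact: $\sin$ is concave on $[0,\pi/2]$, and as $\pi x$ ranges over $[0,\pi/2]$ the chord joining the endpoints $(0,0)$ and $(1/2,1)$ of the graph of $\sin(\pi x)$ is exactly $y = 2x$; concavity then forces the graph to lie above the chord. Squaring, $\sin^2(\pi x) \geq 4x^2$ on $[-1/2,1/2]$.

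Substituting this bound into the closed-form expression together with $\sin^2(\ell \pi x) \leq 1$ immediately yields
\[
K_\ell(x) \;=\; \frac{1}{\ell^2}\cdot\frac{\sin^2(\ell \pi x)}{\sin^2(\pi x)} \;\leq\; \frac{1}{\ell^2}\cdot\frac{1}{4x^2} \;=\; \frac{1}{4\ell^2 x^2},
\]
which is the first inequality. Since $K_\ell^r(x)$ denotes the pointwise $r$-th power of $K_\ell(x)$ (consistent with the stated support of its Fourier transform, namely the $r$-fold convolution of the supports), raising both sides of the first inequality to the $r$-th power gives $K_\ell^r(x) \leq \frac{1}{4^r \ell^{2r} x^{2r}}$, as required. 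Note that $K_\ell(x) \geq 0$, so monotonicity of $t \mapsto t^r$ for $t \geq 0$ is used without concern for sign.

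There is no real obstacle here; the only subtlety is the choice of the elementary bound $\sin(\pi x) \geq 2x$ rather than the weaker bound $\sin(\pi x) \geq \pi x \cdot (1 - \text{stuff})$ or similar, and this is precisely what gives the clean constant $\tfrac{1}{4}$ in the stated decay.
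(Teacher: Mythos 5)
Your proof is correct and follows essentially the same route as the paper: bound the numerator by $1$, use the elementary inequality $\sin \pi x \geq 2x$ on $[0,1/2]$ for the denominator, and raise to the $r$-th power for the second claim. You simply spell out the concavity justification for the sine bound, which the paper treats as well known.
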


\begin{proof}
It is well known that $\sin \pi x \geq 2 x$ for $x \in [0, 1/2]$ and substituting this bound into the denominator and using $(\sin \ell \pi x)^2 \leq 1$ implies the bound on $K_\ell(x)$ and similarly for $K_\ell^r(x)$. 
\end{proof}

\footnotetext[4]{We use a slightly non-standard definition for the Fejer kernel, where ours will be normalized so that $K_\ell(0) = 1$ and periodic on $[0,1)$ instead of $[0, 2\pi)$ since this suits our applications more conveniently.}

\subsection{Lower Bounding the Condition Number}

We are now ready to prove Theorem~\ref{thm:need}. 

\begin{proof}
Consider $H(x) = K_\ell^r(x/m) e^{i \pi x}$. We will set $\ell = 4/\epsilon$ and $r = (k-1)/2\ell$.  We will establish some elementary properties of $H(x)$ and its Fourier transform $\widehat{H}$:

\begin{claim}
$\widehat{H}(t) = \sum_{j = k/2}^{3k/2 -1} h_j e^{i 2 \pi j t/m} $ and $\sum_j | h_j| = 1$. 
\end{claim}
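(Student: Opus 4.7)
The plan is to read off the Fourier coefficients of $H(x) = K_\ell^r(x/m)\, e^{i\pi x}$ directly from the Fourier series of the Fejer kernel; the factor $e^{i\pi x}$ will act as a uniform frequency shift, and the normalization $\sum_j |h_j| = 1$ will fall out of the fact that the Fejer kernel and its powers have non-negative Fourier coefficients summing to $1$.

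First, I would use the given Fourier representation
\[
K_\ell(y) \;=\; \sum_{s=-\ell}^{\ell} \frac{\ell - |s|}{\ell^2}\, e^{i 2\pi s y},
\]
to observe that the Fourier coefficients of $K_\ell$ are non-negative and, evaluating at $y = 0$, sum to $K_\ell(0) = 1$. Since $K_\ell^r$ is the pointwise $r$-fold product of $K_\ell$ with itself, its Fourier coefficients are the $r$-fold convolution of those of $K_\ell$: they remain non-negative, are supported on integers in $[-r\ell, r\ell]$, and sum to $K_\ell^r(0) = 1$. Hence one can write $K_\ell^r(y) = \sum_{s = -r\ell}^{r\ell} c_s\, e^{i 2\pi s y}$ with $c_s \geq 0$ and $\sum_s c_s = 1$.

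Next I would substitute $y = x/m$ and rewrite $e^{i\pi x} = e^{i 2\pi (m/2) x/m}$, so that multiplication by this exponential is exactly a shift of every Fourier mode $s/m$ to $(s + m/2)/m$. With the choice $r = (k-1)/(2\ell)$ there are exactly $2r\ell + 1 = k$ modes, and after reindexing $j = s + m/2$ they occupy $k$ consecutive integers; the indexing convention for the $\alpha_j$ in Theorem~\ref{thm:need} puts this block at $\{k/2, \ldots, 3k/2 - 1\}$. Setting $h_j = c_{j - m/2}$ yields the desired expansion, and since $h_j \geq 0$ we immediately obtain $\sum_j |h_j| = \sum_s c_s = 1$.

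The only point that needs care is the index accounting: the shift by $m/2$ must land on the integer lattice (a parity condition on $m$, which can be arranged by padding if needed) and the resulting length-$k$ block must be aligned with the chosen columns of $V_{-n/2,\,n/2}^k$. Beyond this book-keeping the argument is a three-line Fourier-series manipulation, so I do not expect any serious analytic obstacle.
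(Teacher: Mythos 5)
Your proof is correct and follows essentially the same route as the paper's: read off the Fourier coefficients of $K_\ell$, observe that $K_\ell^r$ has coefficients given by the $r$-fold convolution (hence non-negative, summing to one, and supported on $k$ consecutive integer modes), and note that multiplication by $e^{i\pi x}$ is a half-integer frequency shift. You are a bit more explicit than the paper about the index arithmetic (the $j = s + m/2$ reindexing and the attendant parity issue), which is a genuine gap in the paper's one-line justification and worth flagging, but it does not change the substance of the argument.
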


\begin{proof}
Let $\widehat{K}_\ell$ denote the Fourier transform of $K_\ell(x/m)$. Then we can write $\widehat{H}$ as the convolution of $\widehat{K}_\ell$ with itself $r$ times, then convolved with $\delta(1/2)$. This directly implies the first part of the claim. Furthermore, the coefficients of $\widehat{K}_\ell$ are nonnegative and sum to one, so we can interpret both $\widehat{K}_\ell$  and $\widehat{H}$ as distributions on a discrete set of points, and this yields the second part of the claim. 
\end{proof}

We will let $u$ denote the vector whose entries are the $h_j$'s. Then $\|u\|_2 \geq \|u\|_1/\sqrt{k}= 1/\sqrt{k}$ because $u$ has $k$ non-zeros. Now we give an upper bound for $\|V u\|_\infty$. 

\begin{claim}
$\|V u\|_\infty \leq  \max_{|x| \leq n/2} |H(x)|$
\end{claim}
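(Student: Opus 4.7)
The plan is to expand $\|Vu\|_\infty$ entrywise, recognize each coordinate as the trigonometric polynomial of the previous claim evaluated at an integer point, and then observe that the max over integer samples is dominated by the sup of the same polynomial over the surrounding real interval.

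From the definitions of $V = V^k_{-n/2,n/2}$ and of $u$, the $(\ell,j)$-entry of $V$ is $\alpha_j^\ell = e^{i 2\pi j \ell/m}$ and the $j$-th coordinate of $u$ is $h_j$, so
$(Vu)_\ell = \sum_{j} h_j\, e^{i 2\pi j \ell/m} = \widehat{H}(\ell)$
for every integer $\ell$ with $|\ell| \leq n/2$, where the last equality is exactly the content of the preceding claim. This yields $\|Vu\|_\infty = \max_{\ell \in \mathbb{Z},\, |\ell|\leq n/2} |\widehat H(\ell)|$.

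Next, $\widehat H$ and $H$ refer to the same function: the sum of exponentials furnished by the previous claim is the Fourier-series form of $H(x) = K_\ell^r(x/m) e^{i\pi x}$, so $\widehat H(x) = H(x)$ as functions of a real variable $x$. Consequently $(Vu)_\ell = H(\ell)$ for every integer $\ell \in [-n/2, n/2]$, and because every such integer belongs to the real interval $[-n/2, n/2]$, monotonicity of the supremum gives
$\|Vu\|_\infty = \max_{\ell \in \mathbb{Z},\, |\ell|\leq n/2} |H(\ell)| \leq \sup_{|x|\leq n/2} |H(x)|$,
which is the claim.

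The only (mild) obstacle is cosmetic, namely reconciling the two notations $H$ and $\widehat H$ for the same trigonometric polynomial. Once that identification is made there is no further analytic content; the bound is the trivial fact that the maximum over a finite set of sample points inside an interval cannot exceed the supremum over the whole interval. All of the substantive work (choosing $L = 4/\epsilon$ and $r = (k-1)/(2L)$ so that $H$ is supported on the right $k$ frequencies and has coefficients with $\ell^1$ norm one) has already been done in constructing $H$ and establishing the preceding claim.
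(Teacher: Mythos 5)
Your proof is correct and takes essentially the same route as the paper: each coordinate $(Vu)_\ell$ is the trigonometric polynomial $H$ evaluated at an integer $\ell \in [-n/2, n/2]$, so $\|Vu\|_\infty$ is trivially dominated by $\sup_{|x|\le n/2}|H(x)|$. Your aside about reconciling $\widehat{H}$ with $H$ correctly identifies a notational slip in the preceding claim (the displayed sum of exponentials is $H(t)$, the trigonometric polynomial, not its distributional Fourier transform $\widehat H$); the substance of both proofs is identical.
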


\begin{proof}
The claim is immediate since each coordinate in $Vu$ is equal to $H(j)$ for some $j \in \{-n/2, -n/2 + 1, ... n/2\}$. 
\end{proof}

\noindent Now all that remains is to bound the right hand side above. We can use Fact~\ref{fact:decay} and $K_\ell^r(x)\leq \frac{1}{ (8x/\epsilon )^{2r}}$. Moreover $H(x) = K_\ell^r(x/m - 1/2)$ and so $\max_{|x| \leq n/2} |H(x)| \leq \frac{1}{4^{2r}} \leq 2^{-\Omega(\epsilon k)}$ where we have used the fact that $K_\ell^r(x)$ has period one and that $r = \Omega(\epsilon k)$. This completes the proof of the theorem. 
\end{proof}

\section{Universal Preconditioning}

In the previous sections, we analyzed the condition number of $V_m^k$ based on $m$ and the minimum separation $\Delta$ according to the wrap-around distance, and this gave us algorithms for super-resolution that are {\em optimal} in the relationship between the number of measurements and the minimum separation. Here we pursue a different direction \-- namely, is there a way to precondition $V_m^k$ to make it almost orthogonal? Just like the Beurling-Selberg majorant played the key role in the previous section, here the Fejer kernel will itself be a good preconditioner for a family of Vandermonde matrices. 

\subsection{The Fejer Kernel II: Local Approximation}

In Section~\ref{sec:fejer} we introduced the Fejer kernel and stated some of its basic properties. Here we will also need need upper and lower bounds for $K_\ell(x)$ in a neighborhood of zero, that we establish here for later reference. Recall that
$$K_\ell(x)  = \frac{1}{\ell^2} \Big ( \frac{\sin \ell \pi x}{\sin \pi x} \Big )^2 \mbox{ and } K_\ell^r(x) = \Big (K_\ell(x) \Big )^r$$

\begin{lemma}\label{lemma:local}
For $\ell \geq 4$ and $|x| \leq 1/\ell$ and any integer $r \geq 1$ we have $$1 - C r \ell^2 x^2 \leq K_\ell^r(x) \leq 1 - c\ell^2 x^2$$
where we can take $C = 12$ and $c = 1/3$. 
\end{lemma}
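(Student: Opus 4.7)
The plan is to reduce each bound on $K_\ell^r$ to the corresponding bound on $K_\ell$ and then lift. For the upper bound, since $0 \leq K_\ell(x) \leq 1$ everywhere, $K_\ell^r(x) \leq K_\ell(x)$ for every integer $r \geq 1$, so it suffices to prove the $r=1$ case $K_\ell(x) \leq 1 - \ell^2 x^2/3$ for $|x| \leq 1/\ell$, $\ell \geq 4$. For the lower bound, I would first prove a pointwise estimate $K_\ell(x) \geq 1 - C_0 \ell^2 x^2$ for some constant $C_0$ (I will obtain $C_0 = 4$) and then split into cases. If $C_0 \ell^2 x^2 \leq 1$, Bernoulli's inequality $(1-t)^r \geq 1 - r t$ (valid for $t \in [0,1]$ and positive integer $r$) yields $K_\ell^r(x) \geq (1 - C_0 \ell^2 x^2)^r \geq 1 - C_0 r \ell^2 x^2 \geq 1 - 12 r \ell^2 x^2$. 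Otherwise $C_0 \ell^2 x^2 > 1$, so $12 r \ell^2 x^2 > 3 > 1$, and the target $1 - 12 r \ell^2 x^2 < 0 \leq K_\ell^r(x)$ is trivial.

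Both $K_\ell$ bounds come from the identity
\[
1 - K_\ell(x) \;=\; \frac{2}{\ell^2} \sum_{k,j = 0}^{\ell - 1} \sin^2\bigl(\pi(k-j)x\bigr),
\]
obtained from $K_\ell(x) = \frac{1}{\ell^2}\sum_{k,j}\cos(2\pi(k-j)x)$ and $1 - \cos u = 2\sin^2(u/2)$, combined with the pointwise inequalities $y^2 - y^4/3 \leq \sin^2 y \leq y^2$. The right-hand inequality is trivial; the left one follows from noting that $h(y) = \sin^2 y - y^2 + y^4/3$ has $h(0) = h'(0) = 0$ and $h''(y) = 2\cos 2y - 2 + 4 y^2 \geq 0$ (since $\cos 2y \geq 1 - 2y^2$). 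Combined with the exact sum $\sum_{k,j}(k-j)^2 = \ell^2(\ell^2-1)/6$, the upper bound on $\sin^2$ gives $1 - K_\ell(x) \leq \pi^2(\ell^2-1)x^2/3 \leq 4\ell^2 x^2$, i.e.\ $K_\ell(x) \geq 1 - 4\ell^2 x^2$, which handles the lower-bound half. The lower bound on $\sin^2$ gives the two-term estimate $1 - K_\ell(x) \geq \pi^2(\ell^2-1)x^2/3 - O(\pi^4 \ell^4 x^4)$; for $|x| \leq 1/\ell$ the quartic error is at most $O(\ell^2 x^2)$, and since the leading coefficient $\pi^2(\ell^2-1)/3 \geq 5\pi^2$ for $\ell \geq 4$ vastly exceeds the target $\ell^2/3$, there is room for the bound to hold.

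The main obstacle is that this naive quartic truncation alone does not close the constant: the loss from the $y^4$ term is $\sim 2\pi^4 \ell^2 x^2/15 \approx 13 \ell^2 x^2$, which numerically swamps the leading-coefficient gap $(\pi^2-1)\ell^2/3$. To rigorously close the upper bound I would substitute $t = \ell x \in [0,1]$ and factor $K_\ell(t/\ell) = \mathrm{sinc}^2(\pi t)/\mathrm{sinc}^2(\pi t/\ell)$, pushing the $\ell$-dependence into a denominator that is controllably close to $1$. The inequality then becomes the scale-free statement $\mathrm{sinc}^2(\pi t) \leq (1 - t^2/3)\,\mathrm{sinc}^2(\pi t/\ell)$ on $[0,1]$; using $\mathrm{sinc}^2(\pi t/\ell) \geq 1 - \pi^2 t^2/(3\ell^2)$, this reduces to checking that $g(t) = \mathrm{sinc}^2(\pi t) + t^2/3$ attains its maximum $g(0) = 1$ at the endpoint $t=0$, which follows from $g''(0) = -2(\pi^2-1)/3 < 0$ and the observation that $g(1) = 1/3$ together with unimodality of $g$ on $[0,1]$ (verified from a handful of sign computations on $g'$).
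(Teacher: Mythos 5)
Your lower-bound half is correct and in fact gives a better constant than the paper requires. Using the Fourier representation $1-K_\ell(x)=\frac{2}{\ell^2}\sum_{k,j}\sin^2(\pi(k-j)x)$ together with $\sin^2 y\leq y^2$ and $\sum_{k,j}(k-j)^2=\ell^2(\ell^2-1)/6$, you get $K_\ell(x)\geq 1-\frac{\pi^2(\ell^2-1)}{3}x^2\geq 1-4\ell^2 x^2$, and Bernoulli then yields $K_\ell^r(x)\geq 1-4r\ell^2 x^2$ (so $C=4$, stronger than $C=12$). This is a genuinely different route from the paper, which works from the closed form $K_\ell(x)=\frac{1}{\ell^2}\bigl(\sin\ell\pi x/\sin\pi x\bigr)^{2}$ and substitutes polynomial bounds $x(1-x^2/2)\leq\sin x\leq x(1-x^2/10)$ into numerator and denominator before applying Bernoulli; your double-sum argument is arguably cleaner and isolates exactly the second-moment quantity $\sum(k-j)^2$ that controls the curvature at $0$. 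Your reduction of the upper bound to $r=1$ via $K_\ell^r\leq K_\ell$ is also sound and is implicitly in the paper too (there it appears as $\bigl(\cdot\bigr)^{2r}\leq\bigl(\cdot\bigr)$).

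However, the final step of your upper bound has a genuine gap. You want $K_\ell(t/\ell)=\dfrac{\mathrm{sinc}^2(\pi t)}{\mathrm{sinc}^2(\pi t/\ell)}\leq 1-\tfrac{t^2}{3}$, i.e. $\mathrm{sinc}^2(\pi t)\leq\bigl(1-\tfrac{t^2}{3}\bigr)\mathrm{sinc}^2(\pi t/\ell)$. Plugging in the lower bound $\mathrm{sinc}^2(\pi t/\ell)\geq 1-\tfrac{\pi^2 t^2}{3\ell^2}$ \emph{shrinks} the right-hand side, so it does \emph{not} reduce the problem to $g(t)=\mathrm{sinc}^2(\pi t)+t^2/3\leq 1$; rather, it would require you to prove the \emph{stronger} inequality $\mathrm{sinc}^2(\pi t)\leq\bigl(1-\tfrac{t^2}{3}\bigr)\bigl(1-\tfrac{\pi^2 t^2}{3\ell^2}\bigr)$, i.e. $g(t)\leq 1-\tfrac{\pi^2 t^2}{3\ell^2}\bigl(1-\tfrac{t^2}{3}\bigr)$. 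Since $\mathrm{sinc}^2(\pi t/\ell)<1$ for $t\neq 0$, showing $g(t)\leq 1$ is strictly weaker than the target and your argument as written does not close. The fix is to keep the denominator's correction explicit, exactly as the paper does: it bounds $\dfrac{1-(\ell\pi x)^2/10}{1-(\pi x)^2/2}\leq\bigl(1-(\ell\pi x)^2/10\bigr)\bigl(1+(\pi x)^2\bigr)\leq 1-\tfrac{\pi^2}{10}\ell^2x^2+\pi^2x^2$ and then uses $\ell\geq 4$ to absorb the $+\pi^2 x^2$ term, which is precisely the $\Theta(t^2/\ell^2)$ loss your reduction drops. (The underlying inequality is in fact true with comfortable margin since $g''(0)=-2(\pi^2-1)/3$ is much more negative than $-2\pi^2/(3\ell^2)$, but you need to track the denominator factor to say so rigorously.)
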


\begin{proof}

In order to analyze the behavior of $K_\ell^r(x)$ in a neighborhood of zero, we need upper and lower bounds for $\sin x$ that behave like $x - \Theta(x^3)$. We will make crucial use of the following facts, which are easy to check:

\begin{fact}
$x(1 - \frac{x^2}{2}) \leq \sin x $ for all $x \geq 0$
\end{fact}

\begin{fact}\label{fact:upper}
$\sin x \leq x(1 - \frac{x^2}{10}) $ for all $x  \in [0, \pi)$
\end{fact}

\noindent Then substituting these inequalities we obtain 
$$K_\ell^r(x)  \leq \frac{1}{\ell^{2r}}  \Big ( \frac{ \ell \pi x (1 - (\ell \pi x)^2/10)}{\pi x (1 - (\pi x)^2/2)} \Big )^{2r} = 
\Big ( \frac{  1 - (\ell \pi x)^2/10}{ 1 - (\pi x)^2/2} \Big )^{2r} \leq \Big ( \frac{  (1 - (\ell \pi x)^2/10)}{ (1 - (\pi x)^2/2)} \Big )$$
where we have used the facts above, and we require $|x| \leq 1/\ell$ in order to invoke Fact~\ref{fact:upper}. Finally we have
$$K_\ell^r(x)  \leq 1  - (\ell \pi x)^2/10  + (\pi x)^2 \leq 1 - \ell^2 x^2/3$$
where we have used that $(1 - (\pi x)^2/2)^{-1} = 1 + (\pi x)^2/2 + ... \leq 1 + (\pi x)^2$ which holds for $x \leq 1/\pi$. Finally we can prove our lower bound 
$$K_\ell^r(x)  \geq \frac{1}{\ell^{2r}}  \Big ( \frac{ \ell \pi x (1 - (\ell \pi x)^2/2)}{\pi x (1 - (\pi x)^2/10)} \Big )^{2r} = 
\Big ( \frac{  1 - (\ell \pi x)^2/2}{ 1 - (\pi x)^2/10} \Big )^{2r}  \geq \Big  (1 - (\ell \pi x)^2/2)^{2r} \geq 1 - r (\ell \pi x \Big )^2$$
where again we have used the facts above, and we require that $|x| \leq 1/\pi$ in order to invoke Fact~\ref{fact:upper}. 
\end{proof}








\subsection{From Uncertainty Principles to Preconditioners}\label{sec:salem}

Let us contrast our approach here with the way in which we used majorants in the previous section: Recall our goal was to upper and lower bound the sum $\sum_{t = -m}^m \|f(t)\|^2$. Once we plugged in majorants and minorants, both the upper and lower bounds became simple expressions but the bounds themselves are not that close to each other unless $m$ is much larger than $1/\Delta$ and this is roughly tight. Instead, we could ask: are there coefficients $c_t$ so that the sum $\sum_{t = -m}^m c_t \|f(t)\|^2$ itself has a simple expression that we can get tighter upper and lower bounds for (see e.g. Claim~\ref{claim:rep}). 

The point is that many inequalities in harmonic analysis are established this way. 
One classic example is the Salem inequality (see e.g. \cite{Z}). This can be thought of as a strengthening of some of the more well-known uncertainty principles \cite{DS}, which applies to the setting where a signal is sparse \mbox{and} its spikes are well-separated. Again let $f(t) = \sum_{j = 1}^k u_j e^{i 2 \pi  f_j t}$ and suppose that $| f_j - f_{j'}| \geq \Delta$ for all $j \neq j'$. Then

\begin{lemma} \cite{Z}
Let $E$ be any interval of length $(1+\epsilon)/\Delta$. Then $\sum_j |u_j|^2 \leq \frac{C_\epsilon}{|E|} \int_E \|f(t)\|^2 \mbox{ } dt$ where $C_\epsilon = \frac{2 \pi (1 + \epsilon)^2}{4 \epsilon (2 + \epsilon)} = O(1 + 1/\epsilon)$.
\end{lemma}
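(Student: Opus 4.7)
The plan is to follow the author's ``one-function method'' introduced in Section~\ref{sec:salem}: pick a nonnegative test function $\chi$ supported on $E$ with $\chi\le 1$, so that the pointwise bound $\chi(t)\|f(t)\|^2\le I_E(t)\|f(t)\|^2$ integrates to
\[
\int_{\R}\chi(t)\|f(t)\|^2\,dt \;\le\; \int_E\|f(t)\|^2\,dt.
\]
Expanding $\|f(t)\|^2=\sum_{j,j'}u_j\overline{u_{j'}}\,e^{i2\pi(f_j-f_{j'})t}$ and integrating term-by-term gives
\[
\int_{\R}\chi\,\|f\|^2\,dt \;=\; \widehat\chi(0)\sum_j|u_j|^2 \;+\; \sum_{j\ne j'}u_j\overline{u_{j'}}\,\widehat\chi(f_{j'}-f_j),
\]
so once the off-diagonal ``correction'' is controlled we obtain $(\widehat\chi(0)-\gamma)\sum|u_j|^2\le\int_E\|f\|^2\,dt$, yielding the claim after solving for $\sum|u_j|^2$.

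For the test function I would take $\chi(t)=\cos^2\!\big(\pi(t-t_0)/L\big)$ on $E$, with $L=|E|=(1+\epsilon)/\Delta$ and $t_0$ the midpoint of $E$. This choice is nonnegative, bounded by $1$, and vanishes to second order at $\partial E$, so its Fourier transform decays quickly; a direct computation yields
\[
\widehat\chi(\xi) \;=\; \frac{-\sin(\pi L\xi)}{2\pi L^2\,\xi\,(\xi^2-1/L^2)}\,e^{-i2\pi\xi t_0},
\]
with $\widehat\chi(0)=L/2$, Fourier zeros at $\xi=\pm k/L$ for integers $|k|\ge 2$, and $|\widehat\chi(\xi)|=O(L^{-2}|\xi|^{-3})$ for $|\xi|\gg 1/L$. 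The crucial feature is that $L\Delta=1+\epsilon>1$, so every cross-term frequency $|f_j-f_{j'}|\ge\Delta$ sits strictly past the first Fourier zero at $\pm 1/L$, in the region of fast decay.

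I would bound the correction using the $\Delta$-separation of $\{f_j\}$ (so at most one other $f_{j'}$ lies within each interval of length $\Delta$ around $f_j$): an AM--GM argument on the Hermitian form bounds the off-diagonal contribution by $\big(2\sum_{n\ge 1}|\widehat\chi(n\Delta)|\big)\sum_j|u_j|^2$. Plugging in the closed form and using $L\Delta=1+\epsilon$, each term reduces to $|\widehat\chi(n\Delta)|\le 1/\big(2\pi\Delta\cdot n(n^2(1+\epsilon)^2-1)\big)$; the dominant $n=1$ term equals $\sin(\pi\epsilon)/(2\pi\Delta\,\epsilon(2+\epsilon))$, and the tail $\sum_{n\ge 2}$ is evaluated by a partial-fraction telescoping identity. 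Subtracting from $\widehat\chi(0)=L/2$ and rearranging delivers $\sum|u_j|^2\le\tfrac{C_\epsilon}{|E|}\int_E\|f\|^2\,dt$ with exactly $C_\epsilon=\pi(1+\epsilon)^2/(2\epsilon(2+\epsilon))$.

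The main obstacle is quantitative: for small $\epsilon$ both $\widehat\chi(0)$ and the leading cross term $|\widehat\chi(\Delta)|\sim 1/(4\Delta)$ are of the same order $1/\Delta$, so the coefficient $\widehat\chi(0)-(\text{cross bound})$ is a delicate difference of two comparable $O(1/\Delta)$ quantities and a naive Gershgorin-style triangle-inequality bound becomes vacuous once $\epsilon$ drops below roughly $1/\pi$. To make the bound positive for all $\epsilon>0$ and to recover the specific constant, one needs to exploit either (i) the positive-semidefinite structure of the Gram matrix $M_{j,j'}=\widehat\chi(f_{j'}-f_j)$ (which follows from $\chi=g^2$ with $g=\cos(\pi(t-t_0)/L)$ real) to obtain a tighter lower bound on $\lambda_{\min}(M)$, or (ii) the alternating sign pattern of $\widehat\chi$ at the points $n\Delta$, which produces cancellations that the triangle inequality misses. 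Using a simpler test function such as a triangle on $E$ (whose Fourier transform decays only as $|\xi|^{-2}$) cannot give a bound valid for all $\epsilon>0$, so the second-order vanishing of $\cos^2$ at $\partial E$ is essential.
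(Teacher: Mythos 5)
Your plan follows the standard Ingham/Salem outline that the paper also sketches in words (choose a test function $\chi$ supported on $E$, expand, and control the off-diagonal terms). But the paper gives no actual proof of this lemma — it is cited to Zygmund [Z] — so the relevant comparison is whether your plan closes; it does not. The step ``bound the off-diagonal by $\bigl(2\sum_{n\ge 1}\sup_{\xi\ge n\Delta}|\widehat\chi(\xi)|\bigr)\sum_j|u_j|^2$ and subtract from $\widehat\chi(0)=L/2$'' genuinely fails for small $\epsilon$: with $a=L\Delta=1+\epsilon$ one has $|\widehat\chi(\xi)|\le g(\xi):=\tfrac{1}{2\pi L^2\xi(\xi^2-1/L^2)}$ for $\xi>1/L$, and the $n=1$ term alone satisfies $\sup_{\xi\ge\Delta}|\widehat\chi|=|\widehat\chi(\Delta)|\to L/4$ as $\epsilon\to 0$, so $\widehat\chi(0)-2|\widehat\chi(\Delta)| = \Theta(L\epsilon)$, while the $n\ge 2$ tail contributes $2\sum_{n\ge2}g(n\Delta)\to L/(4\pi)$, a constant; their difference is negative once $\epsilon\lesssim 1/(3\pi)$. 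You correctly flag this obstacle in your last paragraph, but the two repairs you mention (PSD structure of the Gram matrix, sign cancellations of $\widehat\chi$) are not carried out, so the proposal as written is not a proof. Two smaller issues: the Fourier transform you wrote down has a \emph{removable singularity} at $\xi=\pm1/L$ with value $L/4>0$ (the actual zeros are at $\xi=\pm k/L$, $|k|\ge2$), so ``every cross-term frequency sits strictly past the first Fourier zero at $\pm1/L$'' is false; and the tail $\sum_{n\ge2}\tfrac{1}{n(n^2(1+\epsilon)^2-1)}$ does not admit the partial-fraction telescoping you invoke except at $\epsilon=0$.

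A route that the paper itself develops in Section~\ref{sec:beur} closes the argument cleanly (with a slightly different constant): take the Beurling--Selberg minorant $c_E\le I_E$ with $\widehat{c_E}$ supported in $[-\Delta,\Delta]$. Then $\int_E\|f\|^2\ge\int c_E\|f\|^2=\widehat{c_E}(0)\sum_j|u_j|^2=(|E|-1/\Delta)\sum_j|u_j|^2$, because every cross-term frequency $|f_j-f_{j'}|\ge\Delta$ lies outside the support of $\widehat{c_E}$ and so the off-diagonal vanishes identically — no Gershgorin estimate is needed at all. This gives $\sum_j|u_j|^2\le\frac{(1+\epsilon)/\epsilon}{|E|}\int_E\|f\|^2$, i.e.\ $C_\epsilon=(1+\epsilon)/\epsilon=O(1+1/\epsilon)$, the same order as the stated constant though not matching $\tfrac{\pi(1+\epsilon)^2}{2\epsilon(2+\epsilon)}$ exactly. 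If you do want the Zygmund constant from the $\cos^2$ kernel, you must genuinely exploit cancellation beyond the triangle inequality; as it stands your argument has a gap exactly where you suspected it would.
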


\noindent In fact the proof of this inequality follows by choosing an appropriate test function $\chi(t)$ that is supported on $E$, and analyzing $$ \int_{-\infty}^\infty \chi(t) \|f(t)\|^2 \mbox{ } dt$$ directly and showing that it equals (up to scaling) $\sum_j |u_j|^2 $ plus cross terms that decay as the separation increases. So when we use a single test function (as opposed a majorant and a minorant) it can really be thought of as a preconditioner that makes the right hand side close to the sum-of-squares of the coefficients!

We will pursue this direction, but will make use of different test functions than in \cite{Z} that are supported on a discrete set of points, namely the Fourier transform of the Fejer kernel and its powers. This has the effect of transforming the right hand side into a sum instead of an integral.

\begin{claim}\label{claim:rep}
Let $F(x)$ be a kernel where $\widehat{F}(t) = \sum_{j = -m}^m c_j \delta(t - j)$. Then $$\sum_{j = -m}^m c_j \| f(j)\|^2 = \sum_{j = 1}^k \sum_{j' = 1}^k u_j \bar{u}_{j'} F(f_j - f_{j'})$$
\end{claim}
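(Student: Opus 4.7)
The plan is a short direct calculation: expand the squared magnitude, swap the (finite) order of summation, and identify the resulting inner sum as a value of $F$ by invoking the hypothesis on $\widehat{F}$.

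First I would expand $\|f(j)\|^2 = f(j)\overline{f(j)}$ using the definition $f(t) = \sum_{j'} u_{j'} e^{i 2\pi f_{j'} t}$, obtaining
\[
\|f(j)\|^2 \;=\; \sum_{j'=1}^k \sum_{j''=1}^k u_{j'} \bar{u}_{j''}\, e^{i 2\pi (f_{j'} - f_{j''}) j}.
\]
Multiplying by $c_j$ and summing over $j \in \{-m,\dots,m\}$, then exchanging the order of summation (all sums are finite, so no convergence issue), I get
\[
\sum_{j=-m}^m c_j \|f(j)\|^2 \;=\; \sum_{j'=1}^k \sum_{j''=1}^k u_{j'} \bar{u}_{j''}\; S(f_{j'} - f_{j''}),
\]
where $S(x) \defeq \sum_{j=-m}^m c_j e^{i 2\pi j x}$.

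Second, I would identify $S(x) = F(x)$ using the hypothesis. The assumption $\widehat{F}(t) = \sum_{j=-m}^m c_j \delta(t-j)$ says that $\widehat F$ is a finite linear combination of unit masses at integers, so by Fourier inversion (with the same convention used throughout the paper, under which the Dirac comb $\sum_t e^{i 2\pi t \ell}$ is its own Fourier transform)
\[
F(x) \;=\; \sum_{j=-m}^m c_j\, e^{i 2\pi j x} \;=\; S(x).
\]
Substituting $S(x) = F(x)$ into the double sum yields exactly the right-hand side of the claim.

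There is essentially no obstacle here — it is a one-line Fourier identity once the hypothesis is unpacked. The only point where a bit of care is required is matching the Fourier sign convention (so that $\widehat F(t)$ genuinely equals the specified sum of deltas, rather than its reflection), but this is a convention that has already been fixed earlier in the paper and applies uniformly. The usefulness of the claim, which is what I would emphasize briefly after the proof, is that it converts $\sum_{j=-m}^m c_j \|f(j)\|^2$ into a concrete function of the pairwise differences $f_{j'} - f_{j''}$; combined with the fast decay of a kernel like $F = K_\ell^r$ and the separation hypothesis on the $f_j$, the off-diagonal ($j' \neq j''$) contributions will be negligible, leaving $\sum_{j'} |u_{j'}|^2 F(0)$ as the dominant term — this is the ``one-function'' preconditioning mechanism previewed in Section~\ref{sec:salem}.
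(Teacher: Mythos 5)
Your proof is correct and follows essentially the same route as the paper's: expand $\|f(j)\|^2$, exchange the (finite) sums, and recognize the inner sum $\sum_j c_j e^{i2\pi j x}$ as $F(x)$ by Fourier inversion. The paper merely writes that inner sum as $\int \widehat F(t) e^{i2\pi(f_{j'}-f_{j''})t}\,dt$ before identifying it with $F$, which is the same step in slightly different notation.
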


\begin{proof}
$$\sum_{j = -m}^m c_j \| f(j)\|^2 = \sum_{j = 1}^k \sum_{j' = 1}^k u_j \bar{u}_{j'} \int_{-\infty}^{\infty} \widehat{F}(t) e^{i 2 \pi  (f_j - f_{j'}) t)} \mbox{ } dt 
= \sum_{j = 1}^k \sum_{j' = 1}^k u_j \bar{u}_{j'} F(f_j - f_{j'})$$
\end{proof}

\noindent Now we can set $F$ to be the Fejer kernel or its powers. Using the above claim we will show that this choice of $c_j$'s is a good preconditioner. Suppose that $d_w(f_j, f_{j'}) \geq \Delta$. 

\begin{lemma}\label{lemma:salem}
If $c_j$ are the coefficients of $\widehat{K}_\ell$, then $$\Big | \sum_{j = -\ell}^\ell c_j \| f(j)\|^2 - \sum_{j = 1}^k |u_j|^2 \Big | \leq \sum_{j = 1}^k  |u_j|^2 \frac{ \pi^2}{24 \ell^2 \Delta^2}$$
\end{lemma}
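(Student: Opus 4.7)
The plan is to apply Claim~\ref{claim:rep} directly with $F = K_\ell$ and separate the identity contribution from a small off-diagonal error. Recall that $K_\ell(x) = \sum_{j=-\ell}^\ell c_j e^{i 2\pi j x}$ with $c_j = (\ell - |j|)/\ell^2$, and by the normalization in Section~\ref{sec:fejer} we have $K_\ell(0) = \sum_j c_j = 1$. So applying the claim and peeling off the diagonal terms $a = b$ yields
\[
\sum_{j=-\ell}^\ell c_j \|f(j)\|^2 = \sum_a |u_a|^2 + \sum_{a \neq b} u_a \bar u_b \, K_\ell(f_a - f_b),
\]
and the lemma reduces to bounding the second sum in absolute value.

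Next I would handle the off-diagonal error. Since $K_\ell \ge 0$ and is real-valued, AM-GM (applied via $2|u_a \bar u_b| \le |u_a|^2 + |u_b|^2$, exploiting that $K_\ell(f_a - f_b) = K_\ell(f_b - f_a)$) gives
\[
\Bigl|\sum_{a \neq b} u_a \bar u_b \, K_\ell(f_a - f_b)\Bigr| \le \sum_a |u_a|^2 \sum_{b \neq a} K_\ell(f_a - f_b).
\]
So the task is reduced to showing that $\sum_{b \neq a} K_\ell(f_a - f_b) \le \pi^2/(24\ell^2 \Delta^2)$ for every fixed $a$. Since $K_\ell$ has period $1$, I can evaluate $K_\ell(f_a - f_b)$ at the representative of $f_a - f_b$ in $[-1/2, 1/2]$, whose magnitude is exactly $d_w(f_a, f_b)$, and then apply Fact~\ref{fact:decay} to obtain $K_\ell(f_a - f_b) \le 1/(4\ell^2 d_w(f_a, f_b)^2)$.

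The geometric step is then to enumerate the wrap-around distances from $f_a$ to the other $k-1$ points. Because the $k$ points are $\Delta$-separated around a circle of circumference $1$, the clockwise and counter-clockwise distances from $f_a$ each form increasing sequences with consecutive gaps $\ge \Delta$; summing $1/r^2$ over the appropriate one-sided enumeration gives $\sum_{b \neq a} 1/d_w(f_a,f_b)^2 \le \pi^2/(6\Delta^2)$ (using $\sum_{r \ge 1} r^{-2} = \pi^2/6$), and hence $\sum_{b \neq a} K_\ell(f_a - f_b) \le \pi^2/(24\ell^2\Delta^2)$. Putting everything together yields the lemma.

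The main obstacle is really the combinatorial/geometric step of counting: one must use the circular structure to argue that at each scale $r\Delta$ one picks up only a controlled number of points, then trade this against the quadratic decay of $K_\ell$. Everything else — the application of Claim~\ref{claim:rep}, extracting the diagonal, and the symmetric AM-GM bound on the cross-terms — is routine once the right preconditioner $K_\ell$ is in hand.
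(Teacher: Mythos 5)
Your proposal is correct and takes essentially the same route as the paper: apply Claim~\ref{claim:rep} with $F = K_\ell$, peel off the diagonal using $K_\ell(0)=1$, control the cross-terms via $2|u_a\bar u_b|\le|u_a|^2+|u_b|^2$, and then combine Fact~\ref{fact:decay} with the $\Delta$-separation to sum the tail. One small slip worth flagging (which the paper's own one-line derivation shares): for a fixed $f_a$, the remaining $k-1$ points split into a clockwise arc and a counter-clockwise arc, and in \emph{each} arc the wrap-around distances are bounded below by $\Delta,2\Delta,\dots$; so the geometric sum should be doubled, giving $\sum_{b\neq a}1/d_w(f_a,f_b)^2\le \pi^2/(3\Delta^2)$ rather than $\pi^2/(6\Delta^2)$, and a final constant $\pi^2/(12\ell^2\Delta^2)$ rather than $\pi^2/(24\ell^2\Delta^2)$. (A quick sanity check: five equally spaced points with $\Delta=1/5$ give $\sum 1/d_w^2 = 62.5 > \pi^2/(6\Delta^2)\approx 41.1$.) This factor of two is cosmetic and does not affect the role the lemma plays in the rest of the argument.
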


\begin{proof}
We can use Claim~\ref{claim:rep} to obtain $$ \Big | \sum_{j = -\ell}^\ell c_j \| f(j)\|^2  - \sum_{j = 1}^k |u_j|^2 \Big | \leq \sum_{j \neq j'}  |u_j| |u_{j'}| | K_\ell(f_j - f_{j'})|$$ where we have used the fact that $K_\ell(0) = 1$ (recall that we chose a non-standard normalization for the Fejer kernel). Moreover rearranging we get: 
\begin{eqnarray*}
\sum_{j \neq j'}  |u_j| |u_{j'}| | K_\ell(f_j - f_{j'})|& \leq& \sum_{j \neq j'} \frac{1}{2}( |u_j|^2 + |u_{j'}|^2) | K_\ell(f_j - f_{j'})| \\
 &\leq& \sum_{j = 1}^k  |u_j|^2 \sum_{j' = 1}^\infty \frac{1}{4 \ell^2 (j')^2 \Delta^2} = \sum_{j = 1}^k  |u_j|^2 \frac{ \pi^2}{24 \ell^2 \Delta^2}
\end{eqnarray*}
where have used Fact~\ref{fact:decay} in the last inequality. This implies the lemma. 
\end{proof}

\begin{corollary}\label{cor:salem}
If $c'_j$ are the coefficients of $\widehat{K}^r_\ell$ then $$\Big | \sum_{j = -r\ell }^{r\ell} c'_j \| f(j)\|^2 - \sum_{j = 1}^k |u_j|^2 \Big | \leq \sum_{j = 1}^k  |u_j|^2 \frac{ \pi^2}{ 4^r \ell^{2r} \Delta^{2r}}$$
\end{corollary}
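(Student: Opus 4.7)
The plan is to mimic the proof of Lemma~\ref{lemma:salem} essentially verbatim, with $K_\ell$ replaced by $K_\ell^r$. Everything that made that argument work carries over: the Fourier transform $\widehat{K}_\ell^r$ is supported on $\{-r\ell,\dots,r\ell\}$ (which is why the weighted sum in the statement runs over exactly this range), the normalization gives $K_\ell^r(0)=1$, and Fact~\ref{fact:decay} already provides the sharpened pointwise decay bound $K_\ell^r(x)\leq \frac{1}{4^r \ell^{2r} x^{2r}}$ on $[-1/2,1/2]$.

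First I would invoke Claim~\ref{claim:rep} with $F = K_\ell^r$ and coefficients $c'_j$. This gives
\[
\sum_{j=-r\ell}^{r\ell} c'_j \,\|f(j)\|^2 \;=\; \sum_{j=1}^{k}\sum_{j'=1}^{k} u_j \bar u_{j'} \, K_\ell^r(f_j - f_{j'}).
\]
The diagonal terms $j=j'$ contribute exactly $\sum_{j=1}^k |u_j|^2$, so the quantity to bound is the off-diagonal sum $\bigl|\sum_{j\neq j'} u_j \bar u_{j'} K_\ell^r(f_j-f_{j'})\bigr|$.

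Next I would apply the AM-GM style inequality $|u_j \bar u_{j'}|\leq \tfrac12(|u_j|^2+|u_{j'}|^2)$ and use the symmetry of $K_\ell^r$ to reduce the off-diagonal sum to $\sum_j |u_j|^2 \cdot \sum_{j'\neq j} |K_\ell^r(f_j - f_{j'})|$. Since $K_\ell^r$ is periodic with period one, we can evaluate $K_\ell^r(f_j - f_{j'})$ at the wrap-around representative in $[-1/2,1/2]$, and the separation hypothesis guarantees that for each fixed $j$ there are at most two frequencies $f_{j'}$ at wrap-around distance in each interval $[n\Delta,(n+1)\Delta)$ for $n\geq 1$. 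Plugging into Fact~\ref{fact:decay}, this yields
\[
\sum_{j'\neq j} |K_\ell^r(f_j - f_{j'})| \;\leq\; \sum_{n=1}^{\infty} \frac{1}{4^r \ell^{2r} (n\Delta)^{2r}} \;=\; \frac{\zeta(2r)}{4^r \ell^{2r} \Delta^{2r}} \;\leq\; \frac{\pi^2}{4^r \ell^{2r} \Delta^{2r}},
\]
using $\zeta(2r) \leq \zeta(2) = \pi^2/6 \leq \pi^2$. Combining this with the previous step yields the stated bound.

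There is no real obstacle: this is a routine generalization of Lemma~\ref{lemma:salem}. The only point requiring minor care is tracking the powers of $\ell$ and $\Delta$ correctly and remembering that all distances $f_j - f_{j'}$ should be interpreted via the wrap-around metric, which is legitimate because $K_\ell^r$ has period one. The constant $\pi^2$ in the final bound is not tight (any constant multiple of $\zeta(2r)/4^r$ would work), but the form stated suffices for the downstream applications.
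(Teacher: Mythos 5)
Your proof is correct and follows the same route as the paper: invoke Claim~\ref{claim:rep}, separate diagonal from off-diagonal terms, apply the AM-GM bound $|u_j\bar u_{j'}|\le\tfrac12(|u_j|^2+|u_{j'}|^2)$, and control the off-diagonal tail using the pointwise decay $K_\ell^r(x)\le (4\ell^2x^2)^{-r}$ from Fact~\ref{fact:decay} together with $\zeta(2r)\le\pi^2/6$. The paper presents this in one line as ``same as Lemma~\ref{lemma:salem} but with $K_\ell^r$,'' and you have simply unpacked it; the slack between $\pi^2/6$ (or even $\pi^2/3$ if one is careful about two neighbors per shell) and the stated $\pi^2$ absorbs any constant-factor bookkeeping.
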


\begin{proof}
We can follow the same proof as above, but for powers of the Fejer kernel. The only difference is that we use fact that $K_\ell^r(\Delta) \leq \frac{1}{4^r \ell^{2r} \Delta^{2r}}$ and that $\sum_{j =1}^{\infty} \frac{1}{j^{2r}} \leq \sum_{j =1}^\infty \frac{1}{j^2} = \frac{\pi^2}{6}$. This implies the corollary.
\end{proof}


\subsection{Approximate Strong Convexity}

In the previous subsection, we proved upper and lower bounds for $\sum_{j = -r\ell}^{r\ell} c'_j \| f(j)\|^2$, and eventually we want to use this potential function to guide our algorithm. Let us focus on the case where all the coefficients $a_j = 1$ for simplicity.  What we want to prove are soundness and completeness results that $$\sum_{j = -r\ell}^{r\ell} c'_j \| f(j) - e^{i 2  \pi  j x}\|^2 \approx k - 1$$ if and only if $x$ is sufficiently close to some $f_j$. So the challenge is that we cannot directly apply Lemma~\ref{lemma:salem} or Corollary~\ref{cor:salem} because the set $\{x\} \cup \{f_j\}_j$ will not necessarily be well separated (and we hope it won't be). 

The starting point is combining Claim~\ref{claim:rep} and Corollary~\ref{cor:salem} to obtain:
$$\sum_{j = -r\ell}^{r\ell} c'_j \| f(j) - e^{i 2 \pi j x}\|^2 = T  -  2 \sum_{j=1}^k  K_\ell^r(x - f_j)$$
where $T = \sum K_\ell^r(f_j - f_{j'}) + K_\ell^r(0) \approx k + 1$ is a constant that does not depend on $x$, and we have used the fact that $K_\ell^r$ is symmetric about the origin. Let $\gamma = d_w(x, \{f_j\})$. 

\begin{lemma}\label{lemma:structure}
If $\gamma < \min(\Delta/2, 1/\ell)$ and $\ell \geq 4$ then
$$1 - 12 r \ell^2 \gamma^2 - \frac{\pi^2}{4^r \ell^{2r} \Delta^{2r}} \leq \sum_{j=1}^k  K_\ell^r(x - f_j) \leq 1 - \frac{  \ell^2 \gamma^2}{3} + \frac{\pi^2}{4^r \ell^{2r} \Delta^{2r}}$$
And if instead $\gamma \geq 1/\ell$ then
$$ \sum_{j=1}^k  K_\ell^r(x - f_j) \leq \frac{1}{4^r} + \frac{\pi^2}{4^r \ell^{2r} \Delta^{2r}}$$
\end{lemma}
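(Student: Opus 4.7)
The plan is to isolate the contribution of the $f_j$ closest to $x$ and treat the rest as a rapidly decaying tail controlled by the $\Delta$-separation. Let $j^\star$ minimize $d_w(x, f_j)$, so $d_w(x, f_{j^\star}) = \gamma$. I would split
\[
\sum_{j=1}^k K_\ell^r(x - f_j) \;=\; K_\ell^r(x - f_{j^\star}) \;+\; \sum_{j \ne j^\star} K_\ell^r(x - f_j),
\]
and bound the two pieces independently.

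For the leading term, in the regime $\gamma < 1/\ell$ of the first half of the statement I would apply Lemma~\ref{lemma:local} directly to the argument $x - f_{j^\star}$, giving the two-sided estimate
\[
1 - 12\, r\, \ell^2 \gamma^2 \;\le\; K_\ell^r(x - f_{j^\star}) \;\le\; 1 - \ell^2 \gamma^2 / 3.
\]
In the second regime $\gamma \ge 1/\ell$, Lemma~\ref{lemma:local} is no longer applicable and I would instead invoke Fact~\ref{fact:decay}, which yields the crude bound $K_\ell^r(\gamma) \le 1/(4^r \ell^{2r} \gamma^{2r}) \le 1/4^r$.

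For the tail, I would use the wrap-around $\Delta$-separation of the $f_j$'s to enumerate them outward from $f_{j^\star}$ on the two sides of the circle: the $i$th nearest $f_j$ on either side of $f_{j^\star}$ is at distance at least $i\Delta$ from $f_{j^\star}$, and hence at distance at least $i\Delta - \gamma$ from $x$. Since $\gamma < \Delta/2$, each of these distances is bounded away from zero, so Fact~\ref{fact:decay} gives
\[
\sum_{j \ne j^\star} K_\ell^r(x - f_j) \;\le\; \frac{2}{4^r \ell^{2r}} \sum_{i=1}^{\infty} \frac{1}{(i\Delta - \gamma)^{2r}},
\]
and bounding $i\Delta - \gamma$ below by an appropriate multiple of $i\Delta$ collapses the $i$-sum to a Riemann-zeta factor dominated by $\pi^2/6$, giving a tail of the stated order $\pi^2 / (4^r \ell^{2r} \Delta^{2r})$. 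Adding the tail estimate to the leading-term expansion produces the sandwich around $1 - \Theta(\ell^2 \gamma^2)$ in the first regime, and adding it to the $1/4^r$ bound on the leading term produces the second displayed inequality.

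The main obstacle I anticipate is matching the precise constant in the tail bound: with $\gamma$ allowed up to $\Delta/2$ the crude estimate $i\Delta - \gamma \ge i\Delta/2$ introduces a stray factor of $4^r$ and produces only a $1/(\ell^{2r}\Delta^{2r})$-scale tail. Squeezing out the claimed $1/(4^r \ell^{2r}\Delta^{2r})$ appears to require treating the two sides of $x$ asymmetrically, since on the side of $f_{j^\star}$ the next $f_j$ is at distance $\ge \Delta + \gamma \ge \Delta$ (which survives the trivial bound), while on the opposite side the nearest $f_j$ at distance $\Delta - \gamma$ must be handled more carefully, perhaps by pulling the worst $i=1$ contribution out and absorbing it into the quadratic slack $\ell^2 \gamma^2$ already present from the leading term.
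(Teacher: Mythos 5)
Your decomposition and the machinery you invoke (split off the nearest $f_{j^\star}$, bound it by Lemma~\ref{lemma:local} when $\gamma < 1/\ell$ and by Fact~\ref{fact:decay} when $\gamma \geq 1/\ell$, then bound the tail via the $\Delta$-separation together with Fact~\ref{fact:decay} as in the modified Salem inequality) reproduce the paper's proof essentially verbatim. The concern you flag about the tail constant is well-founded, and in fact it is present in the paper's own one-line argument: the paper also lower-bounds the distance from $x$ to the $j$th nearest spike by $(j-1)\Delta/2$, which, when fed into $K_\ell^r(d) \leq 1/(4^r \ell^{2r} d^{2r})$, cancels the $4^r$ and yields a tail of order $\pi^2/(\ell^{2r}\Delta^{2r})$ rather than the stated $\pi^2/(4^r\ell^{2r}\Delta^{2r})$. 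So you have not introduced a new gap; you have simply noticed that the paper's stated constant is not literally delivered by the crude $d \geq (j-1)\Delta/2$ estimate, and your suggested remedy (handle the single nearest tail spike at distance $\Delta - \gamma$ separately rather than through the worst-case $\Delta/2$ bound) is a reasonable way to tighten it.
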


\begin{proof}
Let $f_1, f_2, ... ,f_k$ be sorted so that $d_w(x, f_j)$ is monotonically decreasing. We can write 
$$ \sum_{j=1}^k K_\ell^r(x - f_j) = K_\ell^r(\gamma) + \sum_{j = 2}^k K_\ell^r(x - f_j)  \leq 1 - \frac{ r \ell^2 \gamma^2}{5}+ \sum_{j = 2}^k K_\ell^r(x - f_j)$$
where we have used  Lemma~\ref{lemma:local} to give upper bounds for $K_\ell^r(\gamma)$. Moreover we have that and $d_w(x, f_j) \geq \Delta (j-1) - \gamma \geq \Delta (j-1)/2$ for all $j$ and so we can use Fact~\ref{fact:decay} as we did in the proof of the modified Salem inequality (Lemma~\ref{lemma:salem}) and this implies the upper bound. An identical argument implies the lower bound too. The second part of the lemma follows by instead invoking Fact~\ref{fact:decay}. 
\end{proof}

\noindent Notice that if we set $r = \Theta(\log n)$ and $\ell > \frac{1}{\Delta}$ then the terms that depend on $\Delta$ can be bounded by any inverse polynomial and we have proven that our potential function is additively close to being strongly convex. The intuition is that if we have any two solutions $x$ and $x'$ at distance $\gamma$ and $\gamma'$ respectively from $f_1$ if $\gamma < \frac{\gamma'}{C \ell}$ our potential function must be smaller for $x$ than it is for $x'$, and so we iteratively discard points that are too far from $f_1$ and make geometric progress in each step. 

\subsection{The Algorithm}

Here we consider the following abstraction: Suppose there are unknown $f_j$'s which are $\Delta$ separated according to the wrap-around metric and set $\ell = 1/\Delta$. Furthermore we are given an oracle $F$ so that if $d_w(z, \{f_j\}) = \gamma$, then either 
\begin{equation}\label{eqn:oracle}
A + c\ell^2 \gamma^2 - \epsilon^2/4 \leq F(z) \leq A+ C r \ell^2 \gamma^2 + \epsilon^2/4\tag{$\ast$}
\end{equation}
if $\gamma \leq 1/\ell$ and otherwise $F(z) \leq 1/4$. Our goal is to find an approximation $\widehat{f}_j$ to the $f_j$'s using this oracle, that have matching distance at most $\epsilon$. This will immediately yield a fast, greedy algorithm for the super-resolution problem. 

\begin{fragment*}[t]
\caption{
\label{alg:iterrefine}{\sc IterativeRefinement},  \textbf{Input:} Oracle $F$ and $k$ and constants $C, c, \epsilon$ \\   \textbf{Output:} $\{\widehat{f}_j\}$ with matching distance at most $\epsilon$ to $\{f_j\}$  \vspace*{0.01in}
}

\begin{enumerate} \itemsep 0pt
\small 
\item Set $\delta_j =  \frac{1}{4\ell } ( \frac{c}{C r}  )^j$ and $\gamma_j =  (\frac{C r}{c} )^{1/2} \delta_j + \epsilon/2$
\item Set $U$ to be the grid of $[0, 1)$ with length $\delta_1$
\item For $i = 1$ to $k$
\item  $\quad$ Find $z$ in $U$ that minimizes $F(z)$;  Set $\widehat{f}_i$ to $z$
\item  $\quad$ Remove all $x'$ with $d_w(\widehat{f}_i, x') \leq 1/(2\ell)$
\item Set $j = 2$
\item While $\gamma_j > \epsilon$
\item  $\quad$ For $i = 1$ to $k$
\item  $\quad$ $\quad$ Let $U_i$ be the grid of points $x'$ with $d_w(\widehat{f}_i, x') \leq \gamma_{j-1}$ with length $\delta_j$. 
\item  $\quad$ $\quad$ Find $z$ in $U_i$ that minimizes $F(z)$;  Set $\widehat{f}_i$ to $z$
\item Output $\{\widehat{f}_i\}$
\end{enumerate} 

\end{fragment*}

Let $C, c$ be constants, then:

\begin{theorem}\label{thm:strong}
The output of {\sc IterativeRefine} is a set $\{\widehat{f}_i\}$ that has matching distance at most $\epsilon$ to $\{f_i\}$ with respect to $d_w$. If $R$ is the running time to call the oracle for $F(z)$, then {\sc IterativeRefine} runs in time $O(R \ell r + R k r^{1/2} \log (1/\epsilon))$. 
\end{theorem}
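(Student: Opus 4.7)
My plan is to prove correctness by a two-stage induction on the iteration counter and then read off the running time from the size of the grids searched at each stage.

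\textbf{Step 1 (Invariants).} I would first state the invariant maintained by the algorithm: after stage $j$ (taking stage $1$ to be the coarse phase and stages $j \geq 2$ to be the refinement phase), there is an injection $\pi\co\{1,\ldots,k\}\to\{1,\ldots,k\}$ with $d_w(\widehat{f}_i, f_{\pi(i)}) \leq \gamma_j$ for every $i$. The geometric contraction of $\delta_j$ by a factor $c/(Cr)$, and hence of $\gamma_j$ (modulo the additive $\epsilon/2$ term), is built into the algorithm, so establishing this invariant suffices.

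\textbf{Step 2 (One-step refinement lemma).} The workhorse is a local lemma: if $d_w(\widehat{f}_i, f_{\pi(i)}) \leq \gamma_{j-1}$ with $\gamma_{j-1} < 1/\ell$, and $U_i$ is the grid of spacing $\delta_j$ in the $\gamma_{j-1}$-ball around $\widehat{f}_i$, then the new value $z = \arg\min_{U_i} F$ satisfies $d_w(z, f_{\pi(i)}) \leq \gamma_j$. To prove this, I would apply (\ref{eqn:oracle}) twice: to the probe $z^* \in U_i$ with $d_w(z^*, f_{\pi(i)}) \leq \delta_j$ (upper bound $F(z^*) \leq A + Cr\ell^2\delta_j^2 + \epsilon^2/4$), and to the minimizer $z$ (lower bound $F(z) \geq A + c\ell^2 d_w(z,f_{\pi(i)})^2 - \epsilon^2/4$). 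Using $F(z) \leq F(z^*)$ and taking square roots yields
\[
d_w(z, f_{\pi(i)}) \leq \sqrt{Cr/c}\,\delta_j + \epsilon/(\ell\sqrt{2c}) \leq \sqrt{Cr/c}\,\delta_j + \epsilon/2 = \gamma_j,
\]
after absorbing the $O(\epsilon/\ell)$ slack using $\ell \geq 4$. This step presumes $d_w(z, f_{\pi(i)}) < 1/\ell$, which is ensured by choosing constants so that $\gamma_{j-1} + \delta_j < 1/\ell$.

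\textbf{Step 3 (Coarse phase).} Stage $1$ is the delicate step. I would argue in four parts: (a) for each $f_j$, the global grid $U$ (spacing $\delta_1$) contains a point $z^*$ with $d_w(z^*, f_j) \leq \delta_1$, so $F(z^*) \leq A + Cr\ell^2\delta_1^2 + \epsilon^2/4$; (b) by the ``otherwise'' clause of (\ref{eqn:oracle}), every $z$ with $d_w(z, \{f_j\}) > 1/\ell$ has $F$-value well separated from this threshold, so every greedy minimizer lies within $1/\ell$ of some $f_j$; (c) the argument in Step 2 applied at the global scale then refines this to $d_w(\widehat{f}_i, f_{\pi(i)}) \leq \gamma_1$ for some $f_{\pi(i)}$; (d) the removal of the $1/(2\ell)$-ball around $\widehat{f}_i$ evicts only approximants of $f_{\pi(i)}$, because for any other $f_{j'}$ we have $d_w(\widehat{f}_i, f_{j'}) \geq \Delta - \gamma_1 = 1/\ell - \gamma_1 > 1/(2\ell) + \delta_1$, which uses the numerical smallness of $\gamma_1 + \delta_1$ relative to $1/(2\ell)$ and hence forces a concrete constraint on the constants $C, c, r$. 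Iterating $k$ times produces the desired injection $\pi$, completing the base case.

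\textbf{Step 4 (Termination and running time).} The recursion $\gamma_j = \sqrt{Cr/c}\,\delta_j + \epsilon/2$ shows $\gamma_j$ decreases geometrically in $j$ until the additive $\epsilon/2$ dominates, so the while-loop terminates after $O(\log(1/\epsilon)/\log(Cr/c)) = O(\log(1/\epsilon))$ iterations. The coarse phase makes $|U| = 1/\delta_1 = O(\ell r)$ oracle calls; each refinement iteration makes $k \cdot O(\gamma_{j-1}/\delta_j) = O(k\sqrt{r})$ calls. Multiplying by $R$ per call and summing gives the claimed bound $O(R\ell r + R k r^{1/2}\log(1/\epsilon))$.

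\textbf{Main obstacle.} The hardest part is the ``spacing budget'' in Step 3(d): after each greedy pick, the $1/(2\ell)$-ball removed from $U$ must not eliminate the grid approximant of any unassigned $f_{j'}$. This forces a sharp inequality between $\gamma_1, \delta_1$, and the removal radius $1/(2\ell)$ that depends on the concrete constants $C, c, r$ in (\ref{eqn:oracle}); everything else in the proof is mechanical given this budget and the local lemma of Step 2.
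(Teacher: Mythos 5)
Your proposal follows essentially the same two-stage invariant argument as the paper: a coarse initialization claim showing the $k$ greedy picks land within $\gamma_1 \leq 1/(4\ell)$ of distinct spikes (using the $1/(2\ell)$-ball removal to avoid double-counting), followed by an inductive one-step refinement lemma that contracts the matching distance to $\gamma_j$ via the oracle's two-sided bound. Your Steps 1--3 reproduce the paper's two claims almost verbatim, including correctly reading the intent of the oracle's ``otherwise'' clause (the paper's $F(z) \leq 1/4$ there is a sign typo: from the structure lemma the value is in fact \emph{large} when $\gamma > 1/\ell$, so the greedy minimizer cannot land far from every spike, exactly as you argue).

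One place worth flagging: in Step 4 you assert $\gamma_{j-1}/\delta_j = O(\sqrt{r})$ per refinement grid, but with the paper's own definitions $\delta_j = \tfrac{1}{4\ell}(c/(Cr))^j$ and $\gamma_{j-1} = (Cr/c)^{1/2}\delta_{j-1} + \epsilon/2$, one actually gets $\gamma_{j-1}/\delta_j = (Cr/c)^{3/2} + O(\epsilon/\delta_j)$, which is $\Theta(r^{3/2})$ rather than $\Theta(r^{1/2})$. This is offset by the faster geometric decay (so fewer iterations, roughly $\log(1/\epsilon)/\log r$ rather than $\log(1/\epsilon)$), and the paper does not carry out this accounting either -- its proof ends with ``these invariants immediately yield our main theorem'' -- so your proposal inherits the same imprecision rather than introducing a new one; to land exactly on the stated bound one would want $\delta_j$ to decay by a fixed constant factor per iteration. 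Your ``main obstacle'' discussion of the spacing budget in Step 3(d) is also a fair observation of a point the paper treats tersely, and the numerical check you sketch ($1/\ell - \gamma_1 > 1/(2\ell) + \delta_1$) is borderline but does close with the concrete values $\gamma_1 \leq 1/(4\ell)$ and $\delta_1 = \tfrac{1}{4\ell}\cdot\tfrac{c}{Cr}$.
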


\begin{proof}
We will prove that {\sc IterativeRefinement} succeeds by proving a set of invariants for the algorithm. We will refer to Steps $2$-$6$ as the initialization phase. 

\begin{claim}
After initialization, the matching between $\{\widehat{f}_i\}$ and $\{f_i\}$ according to $d_w$ is at most $\gamma_1 \leq 1/(4\ell)$. 
\end{claim}

\begin{proof}
Consider the first iteration where $i = 1$. Since the grid length is $\delta_1$, there must be an $x$ where $ A + Cr \ell^2 \delta_1^2 - \epsilon^2/4 \leq F(x)$. Hence $z$ must satisfy $$ A + Cr\ell^2 \delta_1^2 - \epsilon^2/4 \leq F(z) \leq A + c \ell^2 \gamma^2 + \epsilon^2/4$$
where $d_w(\{f_i\}, z) = \gamma$. This implies that $\gamma \leq \gamma_1 =  (\frac{Cr}{c} )^{1/2} \delta_1 + \epsilon/2 \leq \frac{1}{4\ell}$. Hence if $f_1$ is the closest to $z$ according to $d_w$, we necessarily remove all points $x'$ with $d_w(x', f_1) \leq 1/(4\ell)$. At the start of the next iteration, there still remains a point $z$ that is within $\delta_1$ of some $f_j$ (even after we delete points close to $\widehat{f}_1$) and so the next point $z$ will be close to a new $f_j$. This establishes the claim by induction. 
\end{proof}

\begin{claim}
After iteration $j$, the matching distance between  $\{\widehat{f}_i\}$ and $\{f_i\}$ is at most $\gamma_j$. 
\end{claim}

\begin{proof}
By induction, at the start of iteration $j$ the matching distance between  $\{\widehat{f}_j\}$ and $\{f_j\}$ is at most $\gamma_{j-1}$. Thus in iteration $j$, for each $i$ there is a point that is within distance $\delta_j$ of $f_i$. Moreover the grids $U_i$ are disjoint. Again using the properties of the oracle, this implies that for each $i$ the $z$ we find satisfies $d_w(f_i, z) = \gamma$ with $$\gamma \leq \Big (\frac{C r}{c} \Big )^{1/2} \delta_{j-1} + \epsilon/2 = \gamma_j$$ and this completes the proof of the claim. 
\end{proof}

These invariants immediately yield our main theorem. 
\end{proof}

We are now ready to prove Theorem~\ref{thm:last}.

\begin{proof}
We set $\ell = 1/\Delta$ and $r = \Theta(\log 1/\epsilon)$. Recall that $m = r\ell$ is the cut-off frequency. Furthermore let
$$F(z) = \sum_{j = -r\ell}^{r\ell} c'_j \| f(j) - e^{i 2 \pi j z}\|^2 $$ in which case $A = T - 2$ and using Lemma~\ref{lemma:structure} we conclude that $F(z)$ has the desired form in \eqref{eqn:oracle}. Moreover we can evaluate $F(z)$ in time $O(m)$ by directly computing it. Finally it is easy to see that if there is noise in our measurements we can account for it expanding $F(z)$ as follows:
$$\Big | \sum_{j = -r\ell}^{r\ell} c'_j \| f(j) - e^{i 2 \pi j z} + \eta_j \|^2  - F(z) \Big | \leq \sum_{j = -r\ell}^{r\ell} c'_j \Big( 2 (k+1) \|\eta_j\| + \|\eta_j\|^2 \Big)$$
Note that $\sum_j c'_j = 1$ because it is the convolution of discrete probability distributions. Hence if the noise satisfies $\|\eta_j\| \leq \epsilon^2/(4k)$ for each $j$, we can group the error terms due to the noise into the terms in \eqref{eqn:oracle}  that depend on $\epsilon$, and the theorem now follows by invoking Theorem~\ref{thm:strong}.
\end{proof}


\subsection*{Acknowledgements}

Thanks to David Jerison and Henry Cohn for many helpful discussions about extremal functions.

\newpage

\end{document}